   \edef\Gin@extensions{\Gin@extensions,.mps}
\newtheorem{thm}{Theorem}
\newtheorem{defn}{Definition}
\newtheorem{exmp}{Example}
\newtheorem{rem}{Remark}
\newtheorem{lem}{Lemma}
\newcommand{\cdeg}{{\rm cdeg}\,}
\newcommand{\rdeg}{{\rm rdeg}\,}
\newcommand{\diag}{{\rm diag}\,}
\DeclareMathOperator{\lcoeff}{lcoeff}
\DeclareMathOperator{\hermiteDiagonal}{HermiteDiagonal}
\DeclareMathOperator{\colBasis}{ColumnBasis}
\def\mnb{\qopname\relax n{MinimalKernelBasis ~ }}
\title{A fast, deterministic algorithm for computing a Hermite Normal Form of a polynomial matrix}
\author{
Wei Zhou and  George Labahn
         \thanks{
             Cheriton School of Computer Science, University of Waterloo, 
             Waterloo ON, Canada N2L 3G1 \texttt{\{w2zhou,glabahn\}@uwaterloo.ca}
          }
}
\begin{document}

\date{}
\maketitle 
\begin{abstract}
Given a square, nonsingular matrix of univariate polynomials $\mathbf{F} \in \mathbb{K}[x]^{n \times n}$ over a field $\mathbb{K}$, we give a fast, deterministic algorithm for finding the
Hermite normal form of $\mathbf{F}$ with complexity $O^{\sim}\left(n^{\omega}d\right)$ where $d$ is the degree of $\mathbf{F}$. Here soft-$O$ notation is Big-$O$ with log factors removed and $\omega$ is the exponent of matrix multiplication. The method relies of a fast algorithm for determining the diagonal entries of its Hermite normal form, having as cost  $O^{\sim}\left(n^{\omega}s\right)$ operations with $s$ the average of the column degrees of $\mathbf{F}$. 
\end{abstract}

\section{Introduction}

For a given square, nonsingular matrix polynomial $\mathbf{F} \in \mathbb{K}[x]^{n \times n}$ there exists a unimodular matrix $\mathbf{U}$ such that $\mathbf{F} \cdot \mathbf{U} = \mathbf{H}$, a matrix in (column) Hermite normal form. Thus 
$$
\mathbf{H} = \left[ \begin{array}{cccc} h_{11} & & \\ h_{21} & h_{22} & & \\
\vdots & \vdots  & \ddots & \\
h_{n1} & \cdots & \cdots & h_{nn} \end{array} \right]
$$
is a lower triangular matrix where each $h_{ii}$ is monic and deg $h_{ij} < \mbox{ deg } h_{ii}$ for all $j<i$.  Other variations include specifying row rather than column forms (in which case the unimodular matrix multiplies on the left rather than the right) or  upper rather than lower triangular form. The Hermite form was first defined by Hermite in 1851 in the context of triangularizing integer matrices.

There has been considerable work on fast algorithms for Hermite form computation. This includes
$O^{\sim}\left(n^{4}d\right)$ algorithms from Hafner and McCurley \cite{hafner} and Iliopoulos \cite{iliopoulos} which control intermediate size by working modulo the determinant. Hafner and McCurley \cite{hafner},  Storjohann and Labahn \cite{storjohann-labahn96} and Villard \cite{villard96} gave new algorithms which reduced the cost to $O^{\sim}\left(n^{\omega + 1}d\right)$ operations, with $2 < \omega < 3$ being the exponent of matrix multiplication. The second named worked with integer matrices but the results carried over directly to polynomial matrices. Mulders and Storjohann \cite{mulders-storjohann:2003} gave an iterative algorithm having complexity $O\left(n^{3}d^2\right)$, thus reducing the exponent of $n$ at the cost of increasing the exponent of the degree $d$. 

During the past decade the goal has been to give an algorithm that computes the Hermite form in the time it takes to multiply two polynomial matrices having the same size $n$ and degree $d$ as the input matrix, namely at a cost $O^{\sim}\left(n^{\omega}d\right)$. Such algorithms already exist for a number of other polynomial matrix problems. This includes probabalistic algorithms for linear solving \cite{mulders-storjohann:2003}, row reduction \cite{Giorgi2003} and polynomial matrix inversion \cite{jeannerod-villard:05} and later deterministic algorithms for linear solving and row reduction \cite{GSSV2012}. In the case of Hermite normal form computation Gupta and Storjohann \cite{GS2011} gave a Las Vegas randomized algorithm with expected running time of $O^{\sim}\left(n^{3}d\right)$. Their algorithm was the first to be both softly cubic in $n$ and linear in $d$.

One natural technique for finding a Hermite form is to first determine a triangular form and to then reduce the lower triangular elements using the diagonals. The problem with this is that the best a-priori bounds for the degrees of the unimodular multiplier $\mathbf{U}$ can become too large for efficient computation (since these bounds are determined from $\frac{1}{\det \mathbf{F}} \mbox{ adj } (\mathbf{F} ) \cdot \mathbf{H}$). On the other hand simply looking for bounds on $\mathbf{H}$ has a similar problem since the best known a-priori bound for the $i$-th column is
$i d$ and hence the sum of these degree bounds is $O(n^2 d)$, a factor of $n$ larger than the
actual sum $\mbox{ deg det}(\mathbf{H}) = n d$.  Gupta and Storjohann make use of the Smith
normal form of $\mathbf{F}$ in order to obtain accurate bounds for the degrees of the diagonal entries (and hence the degrees of the columns) of $\mathbf{H}$.  That combined with some
additional partial information on one of the right multipliers of this Smith form are then used
to find $\mathbf{H}$.

In this paper we give a deterministic Hermite norml form algorithm having complexity 
$O^{\sim}\left(n^{3}d\right)$. As with Gupta and Storjohan ours is a two step process. 
We first determine the diagonal elements of $\mathbf{H}$ and then secondly find the remaining elements having reduced degrees. Our approach is to make use of fast, deterministic methods for shifted  minimal kernel basis and column basis computation to find the diagonal entries. We do this without the need for finding the associated unimodular multiplier. We do this with a cost $O^{\sim}\left(n^{\omega}s\right)$ field operations where $s$ is the average of the column degrees of $\mathbf{F}$.  The remaining entries are then determined making use of a second type of fast shifted minimal kernel basis computation with special care required to reduce the computation to one having small degrees. The use of shifted minimal kernel bases for matrix normal form computation was previously used in \cite{BLV:1999,BLV:jsc06} in order to obtain efficient algorithms in the case where intermediate coefficient growth is a concern. 

The remainder of this paper is organized as follows. In the next section we give preliminary information for shifted degrees, kernel and column bases of polynomial matrices. Section 3 then contains the algorithm for finding the diagonal elements of a Hermite form with the following section giving the details of the fast algorithm for the entire Hermite normal form computation.  The paper ends with a conclusion and topics for future research.

\section{Preliminaries}

In this section we first describe the notations used in this paper,
and then give the basic definitions and properties of {\em shifted
degree}, {\em kernel basis} and {\em column basis} for a matrix
of polynomials. These will be the building blocks used in our algorithm.

\subsection{Shifted Degrees}

Our methods makes use of the concept of {\em shifted} degrees of polynomial matrices \cite{BLV:1999}, basically shifting the importance of the degrees in some of the rows of 
a basis. For a column vector $\mathbf{p}=\left[p_{1},\dots,p_{n}\right]^{T}$ of 
univariate polynomials over a field $\mathbb{K}$, its column degree, denoted by $\cdeg\mathbf{p}$,
is the maximum of the degrees of the entries of $\mathbf{p}$, that is, 
\[
\cdeg~\mathbf{p}=\max_{1\le i\le n}\deg p_{i}.
\]
The \emph{shifted column degree} generalizes this standard column degree by taking the maximum after shifting the degrees by a given integer vector that is known as a \emph{shift}. More specifically, the shifted column degree of $\mathbf{p}$ with respect to a shift
$\vec{s}=\left[s_{1},\dots,s_{n}\right]\in\mathbb{Z}^{n}$, or the
\emph{$\vec{s}$-column degree} of $\mathbf{p}$ is 
\[
\cdeg_{\vec{s}}~\mathbf{p}=\max_{1\le i\le n}[\deg p_{i}+s_{i}]=\deg(x^{\vec{s}}\cdot\mathbf{p}),
\]
where 
\[
x^{\vec{s}}=\diag\left(x^{s_{1}},x^{s_{2}},\dots,x^{s_{n}}\right) ~.
\]
For a matrix $\mathbf{P}$, we use $\cdeg\mathbf{P}$ and $\cdeg_{\vec{s}}\mathbf{P}$
to denote respectively the list of its column degrees and the list
of its shifted $\vec{s}$-column degrees. When $\vec{s}=\left[0,\dots,0\right]$,
the shifted column degree specializes to the standard column degree.
Similarly, $\cdeg_{- \vec{s}}\mathbf{P} \leq 0$ is equivalent to deg $p_{ij} \leq s_i$ for all $i$ and $j$, that is,  $\vec s$ bounds the row degrees of $\mathbf{P}$.

The shifted row degree of a row vector \textbf{$\mathbf{q}=\left[q_{1},\dots,q_{n}\right]$}
is defined similarly as 
\[
\rdeg_{\vec{s}}\mathbf{q}=\max_{1\le i\le n}[\deg q_{i}+s_{i}]=\deg(\mathbf{q}\cdot x^{\vec{s}}).
\]
Shifted degrees have been used previously in polynomial matrix
computations and in generalizations of some matrix normal forms \cite{BLV:jsc06}.
The shifted column degree is equivalent to the notion of \emph{defect}
commonly used in the literature. 

Along with shifted degrees we also make use of the notion of a matrix polynomial being column (or row) reduced. A matrix polynomial $\mathbf{F}$ is column reduced if the leading column coefficient matrix, that is the matrix 
$$
[ \mbox{coeff}( f_{ij}, x, d_j ) ]_{1\leq i, j \leq n}, \mbox{ with } \vec d = \mbox{cdeg } \mathbf{F}, 
$$
has full rank. A matrix polynomial $\mathbf{F}$ is $\vec s$-column reduced if $x^{\vec s} \mathbf{F} $ is column reduced. A similar concept exists for being shifted row reduced.

The usefulness of the shifted degrees can be seen from their applications in polynomial matrix computation problems \cite{ZL2012,za2012}. One of its uses is illustrated by the following lemma from \cite[Chapter 2]{zhou:phd2012}, which can be viewed as a stronger version of the predictable-degree property \cite[page 387]{kailath:1980}. For completeness we also include the proof.
\begin{lem}\label{lem:predictableDegree}
Let $\mathbf{A}\in\mathbb{K}\left[x\right]^{m\times n}$ be a $\vec{u}$-column reduced matrix with no zero columns and with $\cdeg_{\vec{u}}\mathbf{A}=\vec{v}$. Then a matrix $\mathbf{B}\in\mathbb{K}\left[x\right]^{n\times k}$ has $\vec{v}$-column degrees $\cdeg_{\vec{v}}\mathbf{B}=\vec{w}$ if and only if $\cdeg_{\vec{u}}\left(\mathbf{A}\mathbf{B}\right)=\vec{w}$. 
\end{lem}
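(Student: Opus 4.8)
The plan is to prove the statement one column at a time. Since the $j$-th column of $\mathbf{A}\mathbf{B}$ equals $\mathbf{A}$ applied to the $j$-th column of $\mathbf{B}$, and both $\cdeg_{\vec v}\mathbf{B}$ and $\cdeg_{\vec u}(\mathbf{A}\mathbf{B})$ are computed columnwise, it suffices to fix a single column vector $\mathbf{b}\in\mathbb{K}[x]^{n}$ and show that $\cdeg_{\vec u}(\mathbf{A}\mathbf{b})=\cdeg_{\vec v}\mathbf{b}$. Writing $\mathbf{A}=[\mathbf{a}_1,\dots,\mathbf{a}_n]$ and $\mathbf{b}=[b_1,\dots,b_n]^{T}$, I would expand $\mathbf{A}\mathbf{b}=\sum_j \mathbf{a}_j b_j$ and work with the quantity $x^{\vec u}\mathbf{A}\mathbf{b}=\sum_j (x^{\vec u}\mathbf{a}_j)\,b_j$, whose ordinary (unshifted) degree is by definition $\cdeg_{\vec u}(\mathbf{A}\mathbf{b})$.

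First I would establish the easy inequality $\cdeg_{\vec u}(\mathbf{A}\mathbf{b})\le\cdeg_{\vec v}\mathbf{b}$, which needs no hypothesis on $\mathbf{A}$ beyond $\cdeg_{\vec u}\mathbf{A}=\vec v$. Indeed each summand satisfies $\deg\big((x^{\vec u}\mathbf{a}_j)\,b_j\big)=\deg(x^{\vec u}\mathbf{a}_j)+\deg b_j = v_j+\deg b_j$, and the degree of a sum is at most the maximum of the degrees of the summands, so $\cdeg_{\vec u}(\mathbf{A}\mathbf{b})\le\max_j(v_j+\deg b_j)=\cdeg_{\vec v}\mathbf{b}$. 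Columns $j$ with $b_j=0$ contribute nothing and may be ignored; the assumption that $\mathbf{A}$ has no zero columns guarantees each $v_j$ is finite, so these degrees are well defined.

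The reverse inequality is where the $\vec u$-column-reducedness is essential, and it is the step I expect to be the main obstacle. Set $\delta=\cdeg_{\vec v}\mathbf{b}=\max_j(v_j+\deg b_j)$ and let $S=\{\,j: v_j+\deg b_j=\delta\,\}$ be the indices attaining the maximum. The coefficient of $x^{\delta}$ in $x^{\vec u}\mathbf{A}\mathbf{b}$ comes only from the leading terms of the summands indexed by $S$, so it equals $\sum_{j\in S}\hat{\mathbf a}_j\,\hat b_j$, where $\hat{\mathbf a}_j\in\mathbb{K}^{m}$ is the coefficient of $x^{v_j}$ in $x^{\vec u}\mathbf{a}_j$ and $\hat b_j\in\mathbb{K}\setminus\{0\}$ is the leading coefficient of $b_j$. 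By definition the matrix $\hat{\mathbf A}=[\hat{\mathbf a}_1,\dots,\hat{\mathbf a}_n]$ is precisely the leading column coefficient matrix of $x^{\vec u}\mathbf{A}$, which has full column rank because $\mathbf{A}$ is $\vec u$-column reduced. Hence its columns are linearly independent, and since the scalars $\hat b_j$ for $j\in S$ are all nonzero, the combination $\sum_{j\in S}\hat{\mathbf a}_j\hat b_j$ cannot vanish. Thus the coefficient of $x^{\delta}$ is nonzero, giving $\cdeg_{\vec u}(\mathbf{A}\mathbf{b})\ge\delta$, and combined with the easy inequality this yields the desired equality columnwise; assembling the columns proves the lemma. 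The only delicate point to check carefully is the bookkeeping showing that no term of degree higher than $\delta$ appears (guaranteed by the upper bound) and that all degree-$\delta$ contributions are captured by $S$, so that the leading coefficient is exactly the stated linear combination.
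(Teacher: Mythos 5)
Your proof is correct, and in fact it establishes slightly more than the stated equivalence: under the hypotheses on $\mathbf{A}$ you show the unconditional columnwise equality $\cdeg_{\vec{u}}\left(\mathbf{A}\mathbf{B}\right)=\cdeg_{\vec{v}}\mathbf{B}$, from which both directions of the ``if and only if'' follow at once. The paper reaches the same conclusion by a matrix-level argument: it writes the shift-normalized factorization $x^{\vec{u}}\cdot\mathbf{A}\mathbf{B}\cdot x^{-\vec{w}}=\left(x^{\vec{u}}\cdot\mathbf{A}\cdot x^{-\vec{v}}\right)\left(x^{\vec{v}}\cdot\mathbf{B}\cdot x^{-\vec{w}}\right)$, notes that $\vec{u}$-column reducedness with $\cdeg_{\vec{u}}\mathbf{A}=\vec{v}$ means the leading coefficient matrix of the first factor has linearly independent columns, and concludes that a column of the leading coefficient matrix of the product is nonzero exactly when the corresponding column of the leading coefficient matrix of $x^{\vec{v}}\cdot\mathbf{B}\cdot x^{-\vec{w}}$ is nonzero, i.e.\ the two ``column degree $\vec{0}$'' statements are equivalent. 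Your columnwise computation is this same mechanism made explicit: your sum $\sum_{j\in S}\hat{\mathbf{a}}_{j}\hat{b}_{j}$ is precisely a column of the product of those two leading coefficient matrices, with $S$ recording which entries of the corresponding column of $\mathbf{B}$'s shifted leading coefficient matrix are nonzero. The trade-off is that the paper's version is two lines but leaves the degree bookkeeping implicit in the normalization by $x^{-\vec{w}}$ (your ``easy inequality,'' which guarantees no contribution above degree $\delta$, is exactly what that normalization hides), whereas your version spells out both the upper bound and the exact top coefficient, so nothing rests on conventions about leading coefficient matrices of shifted Laurent matrices; yours is the more elementary and self-contained of the two. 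The only cosmetic caveat is the degenerate case of a zero column of $\mathbf{B}$ (where $\delta=-\infty$ and $S=\emptyset$), which both proofs implicitly set aside and which your remark about ignoring $b_{j}=0$ almost, but not quite, covers.
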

\begin{proof}
Being $\vec{u}$-column reduced with $\cdeg_{\vec{u}}\mathbf{A}=\vec{v}$ is equivalent to the leading coefficient matrix of $x^{\vec u} \cdot \mathbf{A} \cdot x^{- \vec{v}}$ having linearly independent columns. The leading coefficient matrix of $x^{\vec v} \cdot \mathbf{B} \cdot x^{- \vec{w}}$ has no zero column if and only if the leading coefficient matrix of 
$$
x^{\vec u} \cdot \mathbf{A B} \cdot x^{- \vec{w}}
= x^{\vec u} \cdot \mathbf{A} \cdot x^{- \vec{v}} x^{\vec v} \cdot \mathbf{B} \cdot x^{- \vec{w}}
$$
has independent columns.
That is, $x^{\vec v} \cdot \mathbf{B} \cdot x^{- \vec{w}}$ has column degree $\vec 0$ if and only if $x^{\vec u} \cdot \mathbf{A B} \cdot x^{- \vec{w}}$ has column degree $\vec 0$.
\end{proof}
An essential fact needed in this paper, also based on the use of shifted degrees, is the efficient multiplication of matrices with unbalanced degrees \cite[Theorem 3.7]{za2012}. 
\begin{thm}\label{thm:multiplyUnbalancedMatrices} 
Let $\mathbf{A}\in\mathbb{K}\left[x\right]^{m\times n}$
with $m\le n$, $\vec{s}\in\mathbb{Z}^{n}$ a shift with entries bounding
the column degrees of $\mathbf{A}$ and $\xi$, a bound on the sum
of the entries of $\vec{s}$. Let $\mathbf{B}\in\mathbb{K}\left[x\right]^{n\times k}$
with $k\in O\left(m\right)$ and the sum $\theta$ of its $\vec{s}$-column
degrees satisfying $\theta\in O\left(\xi\right)$. Then we can multiply
$\mathbf{A}$ and $\mathbf{B}$ with a cost of $O^{\sim}(n^{2}m^{\omega-2}s)\subset O^{\sim}(n^{\omega}s)$,
where $s=\xi/n$ is the average of the entries of $\vec{s}$. 
\end{thm}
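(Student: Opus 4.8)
The plan is to reduce this unbalanced product to a single \emph{balanced} polynomial matrix multiplication by a two-sided partial linearization, and then block that balanced product into dense multiplications each costing $O^{\sim}(m^{\omega}s)$. Throughout, write $s=\xi/n$ for the target block degree, silently replaced by $\max(1,\lceil s\rceil)$ in degenerate cases since this only affects constants. The first step is to record the output degree bound: the $j$-th entry of $\vec{s}$ bounds $\deg\mathbf{A}[:,j]$, and by definition of the $\vec{s}$-column degree we have $\deg\mathbf{B}[j,l]\le w_l-s_j$ where $w_l:=\cdeg_{\vec{s}}\mathbf{B}[:,l]$, so every entry of the $l$-th column of $\mathbf{A}\mathbf{B}$ has degree at most $\max_j(s_j+(w_l-s_j))=w_l$. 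Hence the column degrees of $\mathbf{A}\mathbf{B}$ are bounded by the $w_l$, which sum to $\theta\in O(\xi)$, and the output has only $O(m(\theta+k))$ coefficients and is cheap to write down.

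Next I would linearize. Split each column $\mathbf{A}[:,j]$ of degree $s_j$ into $\alpha_j=\lceil(s_j+1)/(s+1)\rceil$ blocks of degree at most $s$ via $\mathbf{A}[:,j]=\sum_{i\ge0}\mathbf{A}_{j,i}\,x^{i(s+1)}$, and collect the $\mathbf{A}_{j,i}$ as the columns of a balanced matrix $\bar{\mathbf{A}}\in\mathbb{K}[x]^{m\times\bar n}$. The factorization $\mathbf{A}=\bar{\mathbf{A}}\mathbf{E}$ for the obvious monomial matrix $\mathbf{E}$ gives $\mathbf{A}\mathbf{B}=\bar{\mathbf{A}}(\mathbf{E}\mathbf{B})$, where row $(j,i)$ of $\bar{\mathbf{B}}:=\mathbf{E}\mathbf{B}$ is $x^{i(s+1)}\mathbf{B}[j,:]$. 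Because $i(s+1)\le s_j$ for every block index $i<\alpha_j$, the introduced shift cannot raise column $l$ past $w_l$, so $\bar{\mathbf{B}}$ inherits the column-degree bound $w_l$ of the output. I would then apply the same construction to the columns of $\bar{\mathbf{B}}$, splitting column $l$ into $\beta_l=\lceil(w_l+1)/(s+1)\rceil$ blocks of degree at most $s$ to obtain $\bar{\bar{\mathbf{B}}}\in\mathbb{K}[x]^{\bar n\times\bar k}$ with $\bar{\mathbf{B}}=\bar{\bar{\mathbf{B}}}\mathbf{F}$, giving $\mathbf{A}\mathbf{B}=(\bar{\mathbf{A}}\,\bar{\bar{\mathbf{B}}})\mathbf{F}$. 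Counting blocks using $\sum_j s_j\le\xi=ns$, $\sum_l w_l=\theta\in O(\xi)$, and $k\in O(m)\subseteq O(n)$ then shows $\bar n,\bar k\in O(n)$.

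It remains to multiply the balanced matrices $\bar{\mathbf{A}}\in\mathbb{K}[x]^{m\times\bar n}$ and $\bar{\bar{\mathbf{B}}}\in\mathbb{K}[x]^{\bar n\times\bar k}$, both of degree at most $s$ with $\bar n,\bar k\in O(n)$. Partitioning into $m\times m$ blocks produces $O((n/m)^2)$ block products, each a dense multiplication of two $m\times m$ matrices of degree $O(s)$, computable in $O^{\sim}(m^{\omega}s)$ by evaluation and interpolation; the total is $O^{\sim}((n/m)^2m^{\omega}s)=O^{\sim}(n^2m^{\omega-2}s)$. Forming the linearizations and the final multiplication by the monomial matrix $\mathbf{F}$, which merely recombines the column blocks with their $x^{i(s+1)}$ shifts, are coefficient rearrangements bounded by the output size and hence absorbed. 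Since $m\le n$ gives $n^2m^{\omega-2}\le n^{\omega}$, the cost lies in $O^{\sim}(n^{\omega}s)$ as claimed.

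I expect the main obstacle to be the degree bookkeeping across the two linearizations: verifying that the shifts $x^{i(s+1)}$ created when expanding the columns of $\mathbf{A}$ never push a column of $\bar{\mathbf{B}}$ beyond its bound $w_l$ (so the hypothesis $\theta\in O(\xi)$ survives the transformation), and that both block counts $\bar n$ and $\bar k$ remain in $O(n)$. Both facts reduce to the estimate $\sum_j(s_j/s+1)\in O(n)$ obtained from $\xi=ns$, and this is precisely where the balance hypotheses on $\vec{s}$ and on $\theta$ are consumed; the one place requiring care is the boundary inequality $i(s+1)\le s_j$ for the last block, which controls the whole argument.
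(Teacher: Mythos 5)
Your proof is correct, but the comparison here is necessarily against \cite{za2012} rather than against this paper: Theorem~\ref{thm:multiplyUnbalancedMatrices} is stated without proof, imported verbatim from \cite[Theorem 3.7]{za2012}. The argument given in that source is genuinely different from yours: it tames the unbalanced degrees by \emph{dyadic grouping}, splitting the columns of $\mathbf{B}$ into $O(\log k)$ buckets by $\vec{s}$-column degree (a bucket whose degrees have order $\theta/2^{i}$ can hold only $O(2^{i})$ columns, since the degrees sum to $\theta$), bucketing the columns of $\mathbf{A}$ by degree in the same fashion, multiplying each pair of buckets via square blocks, and summing the logarithmically many contributions. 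You instead perform a two-sided \emph{partial linearization}, $\mathbf{A}=\bar{\mathbf{A}}\mathbf{E}$ and $\mathbf{E}\mathbf{B}=\bar{\bar{\mathbf{B}}}\mathbf{F}$, collapsing everything into a single balanced product of matrices of dimension $O(n)$ and degree at most $s$, followed by monomial recombinations that are mere coefficient rearrangements. Your bookkeeping is sound at all the critical points: the per-column output bound $w_{l}$, the boundary inequality $i(s+1)\le s_{j}$ (which is exactly what prevents the expansion of $\mathbf{A}$ from pushing the columns of $\mathbf{E}\mathbf{B}$ past $w_{l}$, hence preserves the hypothesis $\theta\in O(\xi)$ through the transformation), the counts $\bar{n},\bar{k}\in O(n)$ coming from $\xi=ns$, $\theta\in O(\xi)$ and $k\in O(m)$, and the $O((n/m)^{2})$ square products at $O^{\sim}(m^{\omega}s)$ apiece. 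The trade-off between the two routes: yours is uniform and self-contained --- one balanced product, with the hypotheses consumed in a single inequality, in the linearization style of \cite{GSSV2012} --- whereas the grouping proof of \cite{za2012} never materializes expanded matrices and uses balanced multiplication strictly as a black box; both land on the same $O^{\sim}(n^{2}m^{\omega-2}s)$ bound. Two minor patches would make your write-up airtight: discard zero columns of $\mathbf{B}$ at the outset (their $\vec{s}$-column degree is $-\infty$, which your formulas for $\beta_{l}$ do not cover), and quote the $O^{\sim}(m^{\omega}s)$ cost of a dense degree-$s$ product in a form valid over an arbitrary field (e.g., via Cantor--Kaltofen polynomial multiplication), since evaluation--interpolation requires $\mathbb{K}$ to contain enough points.
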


\subsection{Kernel Bases and Column Bases}

The kernel of $\mathbf{F}\in\mathbb{K}\left[x\right]^{m\times n}$
is the $\mathbb{F}\left[x\right]$-module 
\[
\left\{ \mathbf{p}\in\mathbb{K}\left[x\right]^{n}~|~\mathbf{F}\mathbf{p}=0\right\} 
\]
with a kernel basis of $\mathbf{F}$ being a basis of this module. Formally, we have:

\begin{defn}
\label{def:kernelBasis}Given $\mathbf{F}\in\mathbb{K}\left[x\right]^{m\times n}$,
a polynomial matrix $\mathbf{N}\in\mathbb{K}\left[x\right]^{n\times k}$
is a (right) kernel basis of $\mathbf{F}$ if the following properties
hold: 
\begin{enumerate}
\item $\mathbf{N}$ is full-rank. 
\item $\mathbf{N}$ satisfies $\mathbf{F}\cdot\mathbf{N}=0$. 
\item Any $\mathbf{q}\in\mathbb{K}\left[x\right]^{n}$ satisfying $\mathbf{F}\mathbf{q}=0$
can be expressed as a linear combination of the columns of $\mathbf{N}$,
that is, there exists some polynomial vector $\mathbf{p}$ such that
$\mathbf{q}=\mathbf{N}\mathbf{p}$. 
\end{enumerate}
\end{defn}
\noindent
It is easy to show that any pair of kernel bases $\mathbf{N}$ and $\mathbf{M}$ of $\mathbf{F}$
are unimodularly equivalent.

A $\vec{s}$-minimal kernel basis of $\mathbf{F}$ is just a kernel basis that is $\vec{s}$-column reduced. 
\begin{defn}
Given $\mathbf{F}\in\mathbb{K}\left[x\right]^{m\times n}$, a polynomial
matrix $\mathbf{N}\in\mathbb{K}\left[x\right]^{n\times k}$ is a $\vec{s}$-minimal
(right) kernel basis of $\mathbf{F}$ if\textbf{ $\mathbf{N}$} is
a kernel basis of $\mathbf{F}$ and $\mathbf{N}$ is $\vec{s}$-column
reduced. We also call a $\vec{s}$-minimal (right) kernel basis of
$\mathbf{F}$ a $\left(\mathbf{F},\vec{s}\right)$-kernel basis. 

\end{defn}


A column basis of $\mathbf{F}$ is a basis for the $\mathbb{K}\left[x\right]$-module 
\[
\left\{ \mathbf{F}\mathbf{p}~|~\mathbf{p}\in\mathbb{K}\left[x\right]^{n}~\right\} ~.
\]
Such a basis can be represented as a full rank matrix 
$\mathbf{T}\in\mathbb{K}\left[x\right]^{m\times r}$
whose columns are the basis elements. A column basis is not
unique and indeed any column basis right multiplied by a unimodular
polynomial matrix gives another column basis.

The cost of kernel basis computation is given in \cite{za2012} while the cost of column basis computation is given in \cite{za2013}. In both cases they make heavy use of fast methods for
order bases (often also referred to as minimal approximant bases) \cite{BeLa94,Giorgi2003,ZL2012}.

\begin{thm}\label{thm:costGeneral} 
Let $\mathbf{F}\in\mathbb{K}\left[x\right]^{m\times n}$ with $\vec{s}=\cdeg\mathbf{F}$. Then a $\left(\mathbf{F},\vec{s}\right)$-kernel basis can be computed with a cost of $O^{\sim}\left(n^{\omega}s\right)$ field operations where $s=\sum\vec{s}/n$ is the average
column degree of $\mathbf{F}$.
\end{thm}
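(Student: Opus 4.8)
The plan is to reduce the computation of an $\left(\mathbf F,\vec s\right)$-kernel basis to a sequence of order basis (minimal approximant basis) computations, organized by a divide-and-conquer over the \emph{rows} of $\mathbf F$ that keeps the approximation order, and hence the cost, balanced. We may assume $m\le n$ and, after discarding dependent rows via a rank-profile computation, that $\mathbf F$ has full row rank. The quantity that governs everything is the total degree, so I would establish at the outset the degree bound $\sum\cdeg_{\vec s}\mathbf N\le\sum\vec s=ns$ for any $\vec s$-minimal kernel basis $\mathbf N$. This bound both controls the size of the output and is exactly what makes Theorem~\ref{thm:multiplyUnbalancedMatrices} applicable later; it follows from the predictable-degree property of Lemma~\ref{lem:predictableDegree} together with the fact that a unimodular completion of the full-rank reduced matrix $\mathbf N$ has determinant of degree zero.

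For the base case, when the kernel dimension $n-m$ is $\Theta(n)$ (equivalently $m\le cn$ for a fixed $c<1$), the minimal kernel elements have average $\vec s$-column degree $O(s)$, so I would compute a single order basis of $\mathbf F$ to an order $\sigma\in O\!\left(ns/(n-m)\right)=O(s)$ and extract the $\vec s$-minimal kernel basis as the subset of order-basis columns whose $\vec s$-column degree lies below the threshold that guarantees membership in $\ker\mathbf F$. The technical point here is the correctness of the extraction: that these selected low-degree columns already generate the whole kernel module and remain $\vec s$-reduced. In this balanced regime the order basis costs $O^{\sim}(n^{\omega-1}m\sigma)=O^{\sim}(n^{\omega}s)$, as required.

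The recursive step handles the remaining regime $m>cn$. Partition the rows of $\mathbf F$ into a top block $\mathbf F_1$ and a bottom block $\mathbf F_2$ of roughly equal height. I would first compute recursively a $\vec s$-minimal kernel basis $\mathbf N_1$ of $\mathbf F_1$, set $\vec t:=\cdeg_{\vec s}\mathbf N_1$, then compute recursively a $\vec t$-minimal kernel basis $\mathbf N_2$ of the product $\mathbf F_2\mathbf N_1$, and return $\mathbf N_1\mathbf N_2$. Correctness is immediate: every kernel element of $\mathbf F$ lies in $\ker\mathbf F_1$, hence equals $\mathbf N_1\mathbf p$ with $\mathbf F_2\mathbf N_1\mathbf p=0$, so it is captured by $\mathbf N_1\cdot(\text{kernel basis of }\mathbf F_2\mathbf N_1)$. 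Minimality is precisely where Lemma~\ref{lem:predictableDegree} is used: since $\mathbf N_1$ is $\vec s$-column reduced with no zero columns and $\cdeg_{\vec s}\mathbf N_1=\vec t$, we get $\cdeg_{\vec s}(\mathbf N_1\mathbf N_2)=\cdeg_{\vec t}\mathbf N_2$, and $\mathbf N_1\mathbf N_2$ is $\vec s$-reduced exactly when $\mathbf N_2$ is $\vec t$-reduced. Because we are in the regime $m=\Theta(n)$, the two matrix products $\mathbf F_2\mathbf N_1$ and $\mathbf N_1\mathbf N_2$ can be formed (blockwise if necessary) within the hypotheses of Theorem~\ref{thm:multiplyUnbalancedMatrices}, since $\sum\cdeg_{\vec s}\mathbf N_1\le\xi$ and the relevant column dimensions are $O(m)$; each product then costs $O^{\sim}(n^{\omega}s)$. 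Summing over the $O(\log n)$ recursion levels yields the claimed total.

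The hard part, and the real content of the argument, is the degree and dimension bookkeeping. One must prove that the invariant $\sum\cdeg_{\vec s}\mathbf N\le\sum\vec s$ is preserved down the recursion, so that the shifts $\vec t$ fed into the recursive calls stay balanced and the sums $\theta$ of shifted column degrees entering Theorem~\ref{thm:multiplyUnbalancedMatrices} remain $O(\xi)$; without this the unbalanced-multiplication hypothesis fails and the cost degrades by a factor of $n$. Verifying that the factor dimensions meet the $k\in O(m)$ condition of Theorem~\ref{thm:multiplyUnbalancedMatrices} at every level (possibly requiring a transposed or blocked application for the tall product $\mathbf N_1\mathbf N_2$), and ensuring each intermediate matrix $\mathbf F_2\mathbf N_1$ still satisfies the full-rank assumptions used for the degree bound, are the delicate points that make this accounting, rather than the correctness, the crux of the proof.
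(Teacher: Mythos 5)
You should first note that the paper itself does not prove Theorem~\ref{thm:costGeneral}: it imports the result from \cite[Theorem 4.1]{za2012}, so your proposal has to be measured against the algorithm of that cited work. You recall most of its ingredients correctly --- the bound $\sum\cdeg_{\vec{s}}\mathbf{N}\le\sum\vec{s}=\xi$ for any $\vec{s}$-minimal kernel basis, the row-splitting recursion returning $\mathbf{N}_1\mathbf{N}_2$ with correctness and minimality justified by Lemma~\ref{lem:predictableDegree}, and the use of Theorem~\ref{thm:multiplyUnbalancedMatrices} for the two products. But the way you assemble these pieces has a genuine gap, located exactly where the real difficulty of the theorem lies.

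The gap is your base case. The invariant $\sum\cdeg_{\vec{s}}\mathbf{N}\le\xi$ controls only the \emph{average} degree of the minimal kernel basis elements; it says nothing about the maximum, which can be $\Theta(\xi)$ even when the kernel dimension $n-m$ is $\Theta(n)$. Concretely, let $\mathbf{F}=\diag\left(\begin{bmatrix}f & g\end{bmatrix},\begin{bmatrix}1 & 1\end{bmatrix},\dots,\begin{bmatrix}1 & 1\end{bmatrix}\right)\in\mathbb{K}[x]^{m\times 2m}$ with $f,g$ coprime of degree $D$, so that $\xi=2D$ and $s=D/m$. Any kernel element of $\mathbf{F}$ with nonzero first coordinate has its first two coordinates equal to $p\cdot(g,-f)$ for some nonzero polynomial $p$, hence has degree at least $D$; consequently \emph{every} kernel basis of $\mathbf{F}$ contains an element of degree at least $D$, while your proposed order is only $\sigma\in O\left(ns/(n-m)\right)=O(D/m)$. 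No column of an order basis of order $\sigma$ can be that element: restricted to the first two coordinates, the order basis columns generate a module whose determinant has degree $\sigma<D$, so none of them lies in the kernel there. Thus the low-degree columns you extract generate a proper submodule of the kernel, and no degree threshold repairs this; raising $\sigma$ past $\xi$ to capture everything costs $O^{\sim}(n^{\omega-1}m\xi)=O^{\sim}(n^{\omega+1}s)$ when $m=\Theta(n)$, a factor of $n$ too slow. This is precisely the obstruction the algorithm of \cite{za2012} is built around: there an order basis of order roughly $3\xi/m$ is computed at \emph{every} node of the recursion, its role being only to capture the many low-degree kernel elements and thereby shrink the column dimension of the residual to $O(m)$, while the row-halving recursion continues all the way down to a single row, so that the very large orders needed for the few high-degree kernel elements are paid only when the row dimension is tiny. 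In your scheme the row recursion terminates after $O(\log(1/c))=O(1)$ levels and delegates essentially the whole problem to the defective base case, so this is not a repairable bookkeeping detail but the crux of the proof.
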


\begin{thm}\label{thm:fastcolbasis}
There exists a fast, deterministic algorithm for the computation of a column basis of a matrix polynomial $\mathbf{F}$ having complexity $O^{\sim}\left(nm^{\omega-1}s\right)$
field operations in $\mathbb{K}$  with $s$ being the average column degree of $\mathbf{F}$.
\end{thm}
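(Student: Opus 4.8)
The plan is to build column basis computation on top of kernel basis computation (Theorem \ref{thm:costGeneral}) and unbalanced multiplication (Theorem \ref{thm:multiplyUnbalancedMatrices}). The correctness backbone is the following reduction for a matrix with few columns: if $r=\rank\mathbf{F}$ and $\mathbf{N}\in\mathbb{K}[x]^{n\times(n-r)}$ is a minimal kernel basis of $\mathbf{F}$, and $\mathbf{U}_1\in\mathbb{K}[x]^{n\times r}$ completes $\mathbf{N}$ to a unimodular matrix $\mathbf{U}=[\mathbf{U}_1~\mathbf{N}]$, then $\mathbf{F}\mathbf{U}=[\mathbf{F}\mathbf{U}_1~0]$ and, since $\mathbf{U}$ is unimodular, the $r$ columns of $\mathbf{F}\mathbf{U}_1$ generate the same module as $\mathbf{F}$ and hence form a column basis. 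For an input with only $O(m)$ columns this base routine costs one kernel basis call plus one product $\mathbf{F}\mathbf{U}_1$: with $n'=O(m)$ columns and column-degree sum $D$, Theorem \ref{thm:costGeneral} makes the kernel step $O^{\sim}(n'^{\omega-1}D)=O^{\sim}(m^{\omega-1}D)$, and Theorem \ref{thm:multiplyUnbalancedMatrices} keeps the product within the same bound.

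To handle a general wide $\mathbf{F}$ I would recurse on the columns, splitting $\mathbf{F}=[\mathbf{F}_1~\mathbf{F}_2]$ into two halves, computing column bases $\mathbf{T}_1$ and $\mathbf{T}_2$ recursively, and then applying the base routine to the concatenation $[\mathbf{T}_1~\mathbf{T}_2]$. Correctness is immediate: $\mathbf{T}_1$ and $\mathbf{T}_2$ generate the same modules as $\mathbf{F}_1$ and $\mathbf{F}_2$, so $[\mathbf{T}_1~\mathbf{T}_2]$ generates the column module of $\mathbf{F}$, and the base routine extracts a genuine basis from it. The point of the recursion is that each $\mathbf{T}_i$ has at most $r\le m$ columns, so every concatenation fed to the base routine is an $m\times O(m)$ matrix; the expensive kernel computation is therefore never applied to the full width $n$, which is exactly what replaces the $n^{\omega}$ of a single kernel basis call by the $nm^{\omega-1}$ of the theorem.

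The cost then follows from a degree-preservation invariant. A minimal (column reduced) column basis has column-degree sum no larger than that of any generating set, so $\sum\cdeg\mathbf{T}_i\le\sum\cdeg\mathbf{F}_i$; consequently the total column-degree sum across all subproblems at a fixed level of the recursion tree never exceeds the original $ns$. Since the base routine at a node with input column-degree sum $D$ costs $O^{\sim}(m^{\omega-1}D)$, summing over a single level gives $O^{\sim}(m^{\omega-1}\cdot ns)$, and over the $O(\log n)$ levels the bound telescopes to $O^{\sim}(nm^{\omega-1}s)$, with the base cases (single columns) handled directly.

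The main obstacle is precisely this degree control inside the base routine. One must ensure that the completion $\mathbf{U}_1$ and the product $\mathbf{F}\mathbf{U}_1$ can be formed without the degrees of $\mathbf{U}_1$ exceeding what the shift predicts: this is where the $\vec{s}$-minimality of $\mathbf{N}$ and the predictable-degree behaviour of Lemma \ref{lem:predictableDegree} are needed, both to bound the degrees of the completion and to certify that the product is an instance of the unbalanced multiplication of Theorem \ref{thm:multiplyUnbalancedMatrices} rather than a dense $n\times n$ multiplication. Verifying that $[\mathbf{T}_1~\mathbf{T}_2]$ inherits balanced shifted degrees at every merge, so that these two theorems apply uniformly up the tree, is the crux of the argument; once the invariant is established the complexity is a routine geometric sum.
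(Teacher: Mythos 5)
Your divide-and-conquer wrapper and per-level degree accounting are fine as far as they go, but note first that this paper never proves Theorem \ref{thm:fastcolbasis} at all: it is imported verbatim from \cite{za2013}, so your proposal has to stand on its own. It does not, because of a genuine gap sitting exactly where you park it: the base routine. You assume that, given a minimal kernel basis $\mathbf{N}$ of an $m\times O(m)$ matrix $\mathbf{F}'$, one can produce a completion $\mathbf{U}_1$ with $[\mathbf{U}_1~\mathbf{N}]$ unimodular and then return $\mathbf{F}'\mathbf{U}_1$. Neither of the two results you invoke supplies such a $\mathbf{U}_1$: Theorem \ref{thm:costGeneral} produces only $\mathbf{N}$, and Theorem \ref{thm:multiplyUnbalancedMatrices} applies only when the sum of the shifted column degrees of the right-hand factor is already known to be $O(\xi)$, a bound you never establish for $\mathbf{U}_1$. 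Unimodular completion with degree control is itself a substantial problem (the same authors later devoted a separate paper, ``Unimodular completion of polynomial matrices,'' ISSAC 2014, to exactly this), so treating it as an available primitive is not a gap of exposition but of substance. The gap then propagates: (i) without degree bounds on $\mathbf{U}_1$, the product $\mathbf{F}'\mathbf{U}_1$ is not an instance of Theorem \ref{thm:multiplyUnbalancedMatrices}; and (ii) your level invariant $\sum\cdeg\mathbf{T}\le\sum\cdeg\mathbf{F}'$ appeals to a property of \emph{minimal} (column reduced) bases, whereas $\mathbf{F}'\mathbf{U}_1$ is merely \emph{some} column basis --- for an arbitrary completion its degrees can be far larger --- so the geometric sum over the recursion tree is unjustified. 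You flag all of this yourself as ``the crux of the argument,'' which is accurate; but deferring the crux means the actual mathematical content of the theorem is what is missing, and what remains is the routine part.

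For contrast, the proof in the cited source takes a route that never requires a completion as a computational object. As can be read off from how this paper uses it, the algorithm of \cite{za2013} partitions by \emph{rows} rather than columns: it computes a kernel basis $\mathbf{N}$ of the upper block $\mathbf{F}_u$, recurses on $\mathbf{F}_u$ and on $\mathbf{F}_d\mathbf{N}$ (this is precisely \cite[Lemma 3.1]{za2013}, restated here as Lemma \ref{lem:oneStepHermiteDiagonal}, where the unimodular matrix is a pure existence statement whose starred columns are never formed), and it proves that the computed column basis has column degrees bounded by those of the input (\cite[Lemma 5.1]{za2013}, quoted in the proof of Theorem \ref{thm:diagonalCost}) --- the degree control your invariant needs is there a theorem about the algorithm's output, derived from kernel-basis minimality, not an assumption. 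If you want to salvage your base routine, the completion should be replaced by one more kernel basis computation: for instance, a minimal kernel basis of $[\mathbf{G}_1,\,-\mathbf{F}'_u]$ encodes polynomial solutions $\mathbf{U}_1$ of $\mathbf{F}'_u\mathbf{U}_1=\mathbf{G}_1$ together with minimality-controlled degrees, from which the coupling rows $\mathbf{F}'_d\mathbf{U}_1$ can be obtained within the multiplication budget of Theorem \ref{thm:multiplyUnbalancedMatrices}. Carrying out that construction and its degree analysis is exactly the work your proposal labels as verification and leaves undone.
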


\subsection{Example}

\begin{exmp}\label{example1}
Let 
$$
\mathbf{F} = \left[ \begin {array}{rcrcr} x&-{x}^{3}&-2\,{x}^{4}& 2x&-{x}^{2}
\\ \noalign{\medskip}1&-1&-2\,x&2&-x\\ \noalign{\medskip}-3&3\,{x}^{2}
+x&2\,{x}^{2}&-\,{x}^{4} + 1&3\,x\end {array} \right]
$$
be a $3 \times 5$ matrix over $\mathbb{Z}_7[x]$ having column degree $\vec s = (1,3,4, 4, 2)$. Then a column space, $\mathbf{G}$, and a kernel basis, $\mathbf{N}$, of $\mathbf{F}$ are given by 
$$
\mathbf{G} = \left[ \begin {array}{rcr} x&-{x}^{3}&-2\,{x}^{4}
\\ \noalign{\medskip}1&-1&-2\,x\\ \noalign{\medskip}-3&3\,{x}^{2}+
x&2\,{x}^{2}\end {array} \right] ~~
\mbox { and } ~~\mathbf{N} := \left[ \begin {array}{rc} -1&x\\ \noalign{\medskip}-{x}^{2}&0
\\ \noalign{\medskip}- 3\,x&0\\ \noalign{\medskip}-3&0
\\ \noalign{\medskip}0&1\end {array} \right] ~.
$$
For example, if $\{\mathbf{g}_i\}_{i=1, ... , 5}$ denote the columns of $\mathbf{G}$ then column $4$ of $\mathbf{F}$ - denoted by $\mathbf{f}_4$ -  is given by
$$
\mathbf{f}_4  =  - 2 ~\mathbf{g}_1  - 2 x^2 ~ \mathbf{g}_2 + x ~\mathbf{g}_3  + 2 ~\mathbf{g}_4.
$$
Here $\cdeg_{\vec s} \mathbf{N} = (5, 2)$ with shifted leading coefficient matrix
$$
\mbox{lcoeff}_{\vec s} (\mathbf{N} ) = \left[ \begin{array}{rc} 0 & 1 \\ - 1 & 0 \\ - 3 & 0 \\ 0 & 0 \\ 0 & 1 \end{array} \right] .
$$
Since $\mbox{ lcoeff}_{\vec s}( \mathbf{N})$ has full rank we have that $\mathbf{N}$ is a $\vec s$-minimal kernel basis.
\qed
\end{exmp}

\section{\label{sec:diagonals}Determining the Diagonal Entries of a Hermite Normal Form}

In this section we first show how to determine only the diagonal entries of the Hermite normal form of a nonsingular input matrix $\mathbf{F}\in\mathbb{K}\left[x\right]^{n\times n}$ with
$\mathbf{F}$ having column degrees $\vec{s}$. The computation makes use of fast kernel and column basis computation.

Consider unimodularly transforming $\mathbf{F}$ to 
\begin{equation}
\mathbf{F} \cdot \mathbf{U}=\mathbf{G}=\begin{bmatrix}\mathbf{G}_{1} & 0\\
* & \mathbf{G}_{2}
\end{bmatrix} .  \label{eq:step1HermiteDiagonal}
\end{equation}
After this unimodular transformation, the elimination of the top right block of $\mathbf{G}$, the matrix is now closer to being in Hermite normal form. Applying this procedure recursively
to $\mathbf{G}_{1}$ and $\mathbf{G}_{2}$, until the matrices reach dimension $1$, gives the diagonal entries of the Hermite
normal form of $\mathbf{F}$.

While such a procedure can be used to correctly compute the diagonal entries of the Hermite 
normal form of $\mathbf{F}$, a major problem is that the degree of the unimodular multiplier $\mathbf{U}$ can be too large for efficient computation. Our approach is to make use of
fast kernel and column basis methods to efficiently compute only $\mathbf{G}_{1}$ and $\mathbf{G}_{2}$ and so avoid computing $\mathbf{U}$.

Partition 
$\mathbf{F}=\begin{bmatrix}\mathbf{F}_{u}\\\mathbf{F}_{d} \end{bmatrix}$, with 
$\mathbf{F}_{u}$ and $\mathbf{F}_{d}$ consisting of the upper $\lceil n/2\rceil$ and lower $\left\lfloor n/2\right\rfloor $ rows of $\mathbf{F}$, respectively.
Then both upper and lower parts are of full-rank since $\mathbf{F}$ is assumed to be nonsingular. By partitioning $\mathbf{U}=\begin{bmatrix}\mathbf{U}_{\ell} ~, & \mathbf{U}_{r}\end{bmatrix}$, where the column dimension of $\mathbf{U}_{\ell}$ matches the row dimension of $\mathbf{F}_{u}$, then $\mathbf{F} \cdot \mathbf{U}= \mathbf{G}$ becomes
\[
\begin{bmatrix}\mathbf{F}_{u}\\
\mathbf{F}_{d}
\end{bmatrix}\begin{bmatrix}\mathbf{U}_{\ell} & \mathbf{U}_{r}\end{bmatrix}=\begin{bmatrix}\mathbf{G}_{1} & 0\\
* & \mathbf{G}_{2}
\end{bmatrix}.
\]
Notice that the matrix $\mathbf{G}_{1}$ is nonsingular and is therefore a column basis of $\mathbf{F}_{u}$. As such this can be efficiently computed
as mentioned in Theorem \ref{thm:fastcolbasis}. In order to compute $\mathbf{G}_{2}=\mathbf{F}_{d}\mathbf{U}_{r}$,
notice that the matrix $\mathbf{U}_{r}$ is a right kernel basis of $\mathbf{F}_u$, which makes the top right block of $\mathbf{G}$ zero. 

The following lemma states that the kernel basis $\mathbf{U}_{r}$ can be replaced by any other kernel basis of $\mathbf{F}_u$ to give another unimodular matrix that also works. 
\begin{lem}
\label{lem:oneStepHermiteDiagonal}
Partition
$\mathbf{F}=\begin{bmatrix}\mathbf{F}_{u}\\
\mathbf{F}_{d} \end{bmatrix}$ and suppose $\mathbf{G}_{1}$ is a column basis of 
$\mathbf{F}_{u}$ and $\mathbf{N}$ a kernel basis of $\mathbf{F}_{u}$. Then there is
a unimodular matrix $\mathbf{U}=\left[ ~ *~ ,~ \mathbf{N}\right]$ such that
\[
\mathbf{F}\mathbf{U}=\begin{bmatrix}\mathbf{G}_{1} & 0 \\
* & \mathbf{G}_{2}
\end{bmatrix},
\]
where $\mathbf{G}_{2}=\mathbf{F}_{d}\mathbf{N}$.  If $\mathbf{F}$
is square nonsingular, then $\mathbf{G}_{1}$ and $\mathbf{G}_{2}$
are also square nonsingular.
\end{lem}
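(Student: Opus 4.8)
The plan is to construct the left block of $\mathbf{U}$ explicitly, verify the claimed block structure by a direct multiplication, and then reduce the unimodularity of $\mathbf{U}$ to the statement that its columns generate the free module $\mathbb{K}[x]^n$. First I would record the dimensions: since $\mathbf{F}$ is nonsingular, $\mathbf{F}_u$ has full row rank $\lceil n/2\rceil$, so its column basis $\mathbf{G}_1$ is square of size $\lceil n/2\rceil$ and its kernel basis $\mathbf{N}$ is $n\times\lfloor n/2\rfloor$. Because every column of $\mathbf{G}_1$ lies in the column module of $\mathbf{F}_u$, there is a polynomial matrix $\mathbf{U}_{\ell}\in\mathbb{K}[x]^{n\times\lceil n/2\rceil}$ with $\mathbf{F}_u\mathbf{U}_{\ell}=\mathbf{G}_1$; I then set $\mathbf{U}=\left[\mathbf{U}_{\ell},\mathbf{N}\right]$, which is $n\times n$. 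Multiplying, $\mathbf{F}_u\mathbf{U}=\left[\mathbf{F}_u\mathbf{U}_{\ell},\mathbf{F}_u\mathbf{N}\right]=\left[\mathbf{G}_1,0\right]$ since $\mathbf{N}$ is a kernel basis, and $\mathbf{F}_d\mathbf{U}=\left[\mathbf{F}_d\mathbf{U}_{\ell},\mathbf{F}_d\mathbf{N}\right]=\left[*,\mathbf{G}_2\right]$ with $\mathbf{G}_2:=\mathbf{F}_d\mathbf{N}$, giving exactly the asserted block form.

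The hard part will be establishing that $\mathbf{U}$ is unimodular, which I would do by showing its columns generate $\mathbb{K}[x]^n$. Given an arbitrary $\mathbf{w}\in\mathbb{K}[x]^n$, the vector $\mathbf{F}_u\mathbf{w}$ lies in the column module of $\mathbf{F}_u$, which is spanned by the columns of $\mathbf{G}_1=\mathbf{F}_u\mathbf{U}_{\ell}$; hence $\mathbf{F}_u\mathbf{w}=\mathbf{F}_u\mathbf{U}_{\ell}\mathbf{p}$ for some $\mathbf{p}$. Then $\mathbf{w}-\mathbf{U}_{\ell}\mathbf{p}$ lies in the kernel of $\mathbf{F}_u$, so by the defining property of the kernel basis $\mathbf{N}$ in Definition \ref{def:kernelBasis} it equals $\mathbf{N}\mathbf{q}$ for some $\mathbf{q}$, whence $\mathbf{w}=\mathbf{U}_{\ell}\mathbf{p}+\mathbf{N}\mathbf{q}=\mathbf{U}\,[\mathbf{p}^{T},\mathbf{q}^{T}]^{T}$. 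Applying this to each standard basis vector produces a matrix $\mathbf{C}$ with $\mathbf{U}\mathbf{C}=\mathbf{I}_n$, so $\det\mathbf{U}$ is a unit in $\mathbb{K}[x]$, i.e.\ a nonzero constant, proving $\mathbf{U}$ unimodular. I expect the only subtlety to be keeping the two module-membership steps (column module, then kernel) cleanly separated and confirming that the dimensions of $\mathbf{p}$ and $\mathbf{q}$ match the two column blocks of $\mathbf{U}$.

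Finally, for the nonsingularity claim I would use the block-triangular shape together with unimodularity. With $\mathbf{F}$ square nonsingular and $\mathbf{U}$ unimodular, $\mathbf{G}=\mathbf{F}\mathbf{U}$ satisfies $\det\mathbf{G}=\det\mathbf{F}\cdot\det\mathbf{U}\neq0$. Since $\mathbf{G}_1$ and $\mathbf{G}_2$ are both square, of sizes $\lceil n/2\rceil$ and $\lfloor n/2\rfloor$, and $\mathbf{G}$ is block lower triangular, we have $\det\mathbf{G}=\det\mathbf{G}_1\cdot\det\mathbf{G}_2$, which forces both factors to be nonzero. Hence $\mathbf{G}_1$ and $\mathbf{G}_2$ are square and nonsingular, completing the argument.
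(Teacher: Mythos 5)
Your proof is correct, but it is worth noting how it relates to what the paper actually does: the paper's entire proof of this lemma is the single line ``This follows from Lemma 3.1 in \cite{za2013}'', deferring all mathematical content to the column-basis paper of Zhou and Labahn. What you have written is a self-contained reconstruction of the argument that this citation hides, and it is sound at every step: the existence of $\mathbf{U}_{\ell}$ with $\mathbf{F}_{u}\mathbf{U}_{\ell}=\mathbf{G}_{1}$ comes from the column-basis property, the block form of $\mathbf{F}\left[\mathbf{U}_{\ell},\mathbf{N}\right]$ is a direct computation, unimodularity follows from your generation argument (any $\mathbf{w}\in\mathbb{K}[x]^{n}$ splits as $\mathbf{U}_{\ell}\mathbf{p}+\mathbf{N}\mathbf{q}$ using first the column-basis property applied to $\mathbf{F}_{u}\mathbf{w}$ and then the kernel-basis property of Definition \ref{def:kernelBasis}, giving a polynomial right inverse $\mathbf{C}$ and hence $\det\mathbf{U}$ a unit), and the nonsingularity of the diagonal blocks follows from $\det\mathbf{G}_{1}\cdot\det\mathbf{G}_{2}=\det\mathbf{F}\cdot\det\mathbf{U}\neq0$. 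This is essentially the strategy used in the cited reference, so your route is the natural one; what it buys, compared to the paper, is that the lemma becomes self-contained and visibly elementary, at the cost of a page of module-theoretic bookkeeping that the paper outsources. One small refinement: you invoke nonsingularity of $\mathbf{F}$ at the outset to pin down the dimensions ($\mathbf{G}_{1}$ square of size $\lceil n/2\rceil$, $\mathbf{N}$ with $\lfloor n/2\rfloor$ columns), but the lemma's first claim does not assume nonsingularity; since your generation argument never actually uses these counts, it works verbatim when $\mathbf{F}_{u}$ has rank $r$, with $\mathbf{U}_{\ell}$ having $r$ columns and $\mathbf{N}$ having $n-r$ columns, so you could state the dimension bookkeeping conditionally and reserve the rank hypothesis for the final sentence, exactly as the statement does.
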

\begin{proof} This follows from Lemma 3.1 in \cite{za2013} \end{proof}

Note that we do not compute the blocks represented by the symbol $*$,
which may have very large degrees and cannot be computed efficiently.

\prettyref{lem:oneStepHermiteDiagonal} allows us to determine $\mathbf{G}_{1}$
and $\mathbf{G}_{2}$ independently without computing the unimodular
matrix. 
%
This procedure for computing the diagonal entries gives Algorithm 
\prettyref{alg:hermiteDiagonal}. Formally 
the cost of this algorithm is given in Theorem \ref{thm:diagonalCost}.

\begin{algorithm}[t]
\caption{$\hermiteDiagonal(\mathbf{F})$}
\label{alg:hermiteDiagonal}

\begin{algorithmic}[1]
\REQUIRE{$\mathbf{F}\in\mathbb{K}\left[x\right]^{n\times n}$ is nonsingular.
}

\ENSURE{$\mathbf{d}\in\mathbb{K}\left[x\right]^{n}$ a list of diagonal entries
of the Hermite normal form of $\mathbf{F}$.}

\STATE{Partition $\mathbf{F} := \begin{bmatrix}\mathbf{F}_{u}\\
\mathbf{F}_{d}
\end{bmatrix}$, with $\mathbf{F}_{u}$ consists of the top $\left\lceil n/2\right\rceil $
rows of $\mathbf{F}$;}

\STATE{\textbf{if }$n=1$ \textbf{then} \textbf{return} $\mathbf{F}$; \textbf{endif};}

\STATE{$\mathbf{G}_{1}:=\colBasis(\mathbf{F}_{u})$;}

\STATE{$\mathbf{N}:=\mnb(\mathbf{F}_{u},\cdeg\mathbf{F})$;}

\STATE{$\mathbf{G}_{2}:=\mathbf{F}_{d}\mathbf{N}$;}

\STATE{$\mathbf{d}_{1}:=\hermiteDiagonal(\mathbf{G}_{1}) $
       $~~~~\mathbf{d}_{2}:=\hermiteDiagonal(\mathbf{G}_{2});$}

\STATE{\textbf{return} $\left[\mathbf{d}_{1},\mathbf{d}_{2}\right]$;}
\end{algorithmic}
\end{algorithm}

\begin{exmp}\label{example2}
Let
$$
\mathbf{F} = \left[ \begin {array}{rcccc} x&-{x}^{3}&-2\,{x}^{4}&2x&-{x}^{2}
\\ \noalign{\medskip}1&-1&-2\,x&2&-x\\ \noalign{\medskip}-3&3\,{x}^{2}
+x&2\,{x}^{2}&-\,{x}^{4} + 1&3\,x\\ \noalign{\medskip}0&1&{x}^{2}+2\,x-2
&\,{x}^{3}+2 x-2&0\\ \noalign{\medskip}1&-{x}^{2}+2&-2\,{x}^{3}-3\,x+3
&2 x+ 2&0\end {array} \right]
$$
working over $\mathbb{Z}_7[x]$. If $\mathbf{F}_u$ denotes the top three rows of $\mathbf{F}$ then a column basis $\mathbf{G}_1$ and kernel basis $\mathbf{N}$ were given in Example \ref{example1}.
If $\mathbf{F}_d$ denotes the bottom $2$ rows of $\mathbf{F}$, then this gives diagonal blocks $\mathbf{G}_1$ and $\mathbf{G}_2 = \mathbf{F}_d \cdot \mathbf{N}$ as
$$
\mathbf{G}_1 = \left[ \begin {array}{rcr} x&-{x}^{3}&-2\,{x}^{4}
\\ \noalign{\medskip}1&-1&-2\,x\\ \noalign{\medskip}-3&3\,{x}^{2}+
x&2\,{x}^{2}\end {array} \right] 
\mbox { and } ~~
\mathbf{G}_2 =  \left[ \begin {array}{cc} {x}^{3}-1&0\\ \noalign{\medskip}-x&x
\end {array} \right] ~
$$
Recursively computing with $\mathbf{G}_1$ gives column space and nullspace for the top $2$ rows, $\mathbf{G_U}_1$ as
$$
\tilde{ \mathbf{G} }_1 = \left[ \begin{array}{cc} x & 0 \\ 1 & x^2 - 1 \end{array} \right ] ~~
\mbox{ and } ~~ \tilde{\mathbf{N}} = \left[ \begin{array}{r}  0 \\ -2 x \\1  \end{array} \right ] .
$$
This in turn gives $\tilde{ \mathbf{G} }_2 =  { \mathbf{G_d} }_2 \cdot \tilde{\mathbf{N}} = [ x^3 ]$ which gives the two diagonal blocks from $\mathbf{G}_1$. As $\tilde{ \mathbf{G} }_1$ is triangular we have triangularized $\mathbf{G}_1$. Similarly as $\mathbf{G}_2$ is already in triangular form we do not need to do any extra work. As a result we have that  $\mathbf{F}$ is unimodularly equivalent to 
$$
\left[ \begin{array}{ccccc} x&&&&\\ 
  \noalign{\medskip} * &{x}^{2}-1&&&\\ 
  \noalign{\medskip}*&*&{x}^{3}&&\\ 
  \noalign{\medskip}* & * & * &{x}^{3}-1&\\ 
  \noalign{\medskip} *&*&*&*&x
\end{array} \right]
$$
giving the diagonal elements of the Hermite form of $\mathbf{F}$.
\qed
\end{exmp}


\begin{thm}\label{thm:diagonalCost}
Algorithm \prettyref{alg:hermiteDiagonal} 
costs $O^{\sim}\left(n^{\omega}s\right)$ field operations to compute the diagonal entries 
for the Hermite normal form of a nonsingular matrix $\mathbf{F}\in\mathbb{K}\left[x\right]^{n\times n}$, where
$s$ is the average column degree of $\mathbf{F}$. 
\end{thm}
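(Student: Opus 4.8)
The plan is to analyze the recursion tree generated by Algorithm~\ref{alg:hermiteDiagonal}. Correctness---that the concatenated diagonals of the Hermite forms of $\mathbf{G}_1$ and $\mathbf{G}_2$ are exactly the diagonal of the Hermite form of $\mathbf{F}$---follows from \prettyref{lem:oneStepHermiteDiagonal} together with the block-triangular shape of $\mathbf{G}$, so I would concentrate on the cost. Consider one node, processing a square nonsingular $\mathbf{M}$ of dimension $p$ with $\cdeg\mathbf{M}=\vec\sigma$ and average column degree $\sigma=\sum\vec\sigma/p$. The three nontrivial operations are: the column basis $\mathbf{G}_1$ of $\mathbf{M}_u$, costing $O^{\sim}(p\,(p/2)^{\omega-1}\sigma)=O^{\sim}(p^{\omega}\sigma)$ by Theorem~\ref{thm:fastcolbasis}; the $\vec\sigma$-minimal kernel basis $\mathbf{N}$ of $\mathbf{M}_u$, costing $O^{\sim}(p^{\omega}\sigma)$ by Theorem~\ref{thm:costGeneral}; and the product $\mathbf{G}_2=\mathbf{M}_d\mathbf{N}$. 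For the last I would invoke Theorem~\ref{thm:multiplyUnbalancedMatrices} with $\mathbf{A}=\mathbf{M}_d$, whose $\lfloor p/2\rfloor\le p$ rows satisfy the dimension constraint and whose column degrees are bounded by $\vec\sigma$ (so $\xi=\sum\vec\sigma=p\sigma$), and $\mathbf{B}=\mathbf{N}$, which has $k=\lfloor p/2\rfloor\in O(\lfloor p/2\rfloor)$ columns and, as shown below, $\vec\sigma$-column degrees summing to $\theta\le p\sigma\in O(\xi)$; this yields $O^{\sim}(p^{2}(p/2)^{\omega-2}\sigma)=O^{\sim}(p^{\omega}\sigma)$. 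Hence each node of dimension $p$ contributes $O^{\sim}(p^{\omega}\sigma)=O^{\sim}\!\big(p^{\omega-1}\cdot\sum\cdeg\mathbf{M}\big)$.

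The crux is controlling how the total column-degree budget evolves down the tree. I would prove that the children $\mathbf{G}_1,\mathbf{G}_2$ of a node $\mathbf{M}$ satisfy $\sum\cdeg\mathbf{G}_1+\sum\cdeg\mathbf{G}_2\le 2\sum\cdeg\mathbf{M}$. For $\mathbf{G}_1$ I would use the standard property of the column-basis construction that its output column degrees are bounded by those of $\mathbf{M}_u$, so $\sum\cdeg\mathbf{G}_1\le\sum\cdeg\mathbf{M}_u\le\sum\cdeg\mathbf{M}$. For $\mathbf{G}_2=\mathbf{M}_d\mathbf{N}$, since every column degree of $\mathbf{M}_d$ is at most the corresponding entry of $\vec\sigma$, a direct degree count gives $\cdeg(\mathbf{M}_d\mathbf{N})\le\cdeg_{\vec\sigma}\mathbf{N}$ entrywise, whence $\sum\cdeg\mathbf{G}_2\le\sum\cdeg_{\vec\sigma}\mathbf{N}$. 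It then remains to bound $\sum\cdeg_{\vec\sigma}\mathbf{N}\le\sum\vec\sigma=\sum\cdeg\mathbf{M}$: extending $\mathbf{N}$ to a $\vec\sigma$-minimal unimodular $\mathbf{U}=[\,*\mid\mathbf{N}\,]$, the matrix $x^{\vec\sigma}\mathbf{U}$ is column reduced with determinant of degree $\sum\vec\sigma$, so its column degrees sum to exactly $\sum\vec\sigma$, and since $\vec\sigma\ge 0$ forces every column to have nonnegative $\vec\sigma$-degree, the subset of columns forming $\mathbf{N}$ sums to at most $\sum\vec\sigma$. Combining these bounds gives the claimed doubling, so writing $D_k$ for the total of $\sum\cdeg$ over all nodes at depth $k$ we obtain $D_0=\sum\cdeg\mathbf{F}=ns$ and $D_k\le 2^{k}\,ns$.

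Finally I would sum over the $O(\log n)$ levels. The node dimensions at depth $k$ are each $O(n/2^{k})$ while $D_k\le 2^{k}ns$, so the work at that depth is $\sum_j p_j^{\omega-1}d_j\le O^{\sim}\!\big((n/2^{k})^{\omega-1}D_k\big)=O^{\sim}\!\big((n/2^{k})^{\omega-1}2^{k}ns\big)=O^{\sim}\!\big(n^{\omega}s\cdot 2^{k(2-\omega)}\big)$. Because $\omega>2$ the factor $2^{k(2-\omega)}$ decays geometrically, so the sum over $k\ge 0$ is dominated by the root and equals $O^{\sim}(n^{\omega}s)$, with the base cases $p=1$ contributing only $O(n)$ in total. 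The main obstacle is precisely the degree-growth bound of the second paragraph: one must guarantee that the per-level column-degree budget grows by at most a constant factor, since it is the interplay between this mild growth and the decaying $(n/2^{k})^{\omega-1}$ factor---valid only for $\omega>2$---that collapses the geometric series into a single $O^{\sim}(n^{\omega}s)$ term rather than incurring an extra $\log n$ or $n$ blow-up.
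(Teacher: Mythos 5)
Your overall route is the same as the paper's: correctness from Lemma~\ref{lem:oneStepHermiteDiagonal}, per-node cost $O^{\sim}\!\left(p^{\omega-1}\sum\cdeg\mathbf{M}\right)$ from Theorems~\ref{thm:fastcolbasis}, \ref{thm:costGeneral} and \ref{thm:multiplyUnbalancedMatrices}, a column-degree budget that does not grow from parent to child, and a geometric sum over the $O(\log n)$ levels (the paper packages this as the recurrence $g(n)\in O^{\sim}(n^{\omega-1}\xi)+2g(\left\lceil n/2\right\rceil)$ with $\xi=\sum\vec{s}$ held fixed down the recursion). But there is a genuine gap at the one point where you substitute your own argument for the paper's citation, namely the bound $\sum\cdeg_{\vec\sigma}\mathbf{N}\le\sum\vec\sigma$ for a $\vec\sigma$-minimal kernel basis $\mathbf{N}$ of $\mathbf{M}_u$. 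You derive it by asserting that $\mathbf{N}$ extends to a unimodular $\mathbf{U}=[\ast,\,\mathbf{N}]$ with $x^{\vec\sigma}\mathbf{U}$ column reduced. Such a completion need not exist, even under the standing hypothesis $\vec\sigma\ge\cdeg\mathbf{M}_u$. Take $\mathbf{M}=\begin{bmatrix}x+1&-x\\1&1\end{bmatrix}$, so that $\mathbf{M}_u=\left[x+1,\,-x\right]$ and $\vec\sigma=\cdeg\mathbf{M}=(1,1)$. The kernel of $\mathbf{M}_u$ is generated by $\mathbf{N}=\left[x,\,x+1\right]^{T}$, which is the unique kernel basis up to scaling and hence $\vec\sigma$-minimal, with $\cdeg_{\vec\sigma}\mathbf{N}=2=\sum\vec\sigma$. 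Unimodular completions certainly exist (e.g.\ $\mathbf{u}=\left[1,\,1\right]^{T}$ gives $\det\mathbf{U}=1$), but every completion column $\mathbf{u}$ is nonzero and therefore has $\cdeg_{\vec\sigma}\mathbf{u}\ge 1$, so the column degrees of $x^{\vec\sigma}\mathbf{U}$ sum to at least $3$, whereas $\deg\det\left(x^{\vec\sigma}\mathbf{U}\right)=\sum\vec\sigma+\deg\det\mathbf{U}=2$. Since a square nonsingular polynomial matrix is column reduced exactly when its column degree sum equals the degree of its determinant, no unimodular completion of this $\mathbf{N}$ is $\vec\sigma$-column reduced, and the step where the determinant degree is supposed to pay for all the shifted column degrees fails.

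The inequality you need is nonetheless true: it is precisely Theorem 3.4 of \cite{za2012}, which is exactly what the paper invokes at this step (note that Lemma~\ref{lem:oneStepHermiteDiagonal} guarantees only a unimodular completion, not a shifted-reduced one; the argument in \cite{za2012} is of a different nature). Replacing your completion argument by that citation repairs the proof, and everything else you wrote --- the bound $\cdeg\mathbf{G}_1\le\cdeg\mathbf{M}_u$ via \cite[Lemma~5.1]{za2013}, the entrywise bound $\cdeg\left(\mathbf{M}_d\mathbf{N}\right)\le\cdeg_{\vec\sigma}\mathbf{N}$, the budget bound $D_k\le 2^{k}ns$, and the level-by-level sum $O^{\sim}\!\left(n^{\omega}s\cdot 2^{k(2-\omega)}\right)$ collapsing geometrically because $\omega>2$ --- is an equivalent, correct reformulation of the paper's recurrence-based analysis.
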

\begin{proof}
The three main operations are computing a column basis of $\mathbf{F}_{u}$,
computing a kernel basis $\mathbf{N}$ of $\mathbf{F}_{u}$, and multiplying the
matrices $\mathbf{F}_{d}\mathbf{N}$.  Set $\xi = \sum\vec{s}$, a scalar used to measure size for our problem.

For the column basis computation, by Theorem \ref{thm:fastcolbasis} (see also \cite[Theorem 5.6]{za2013}) we know that a column basis $\mathbf{G}_{1}$ of $\mathbf{F}_{u}$ can be computed with a cost of $O^{\sim}\left(n^{\omega}s\right).$ By \cite[Lemma 5.1]{za2013} the column degrees
of the computed column basis $\mathbf{G}_{1}$ are also bounded by the original column degrees $\vec{s}$. Similarly, from Theorem \ref{thm:costGeneral} (see also \cite[Theorem 4.1]{za2012}),
computing a $\vec{s}$-minimal kernel basis $\mathbf{N}$ of $\mathbf{F}_{u}$ also costs $O^{\sim}\left(n^{\omega}s\right)$ operations.
 
By Theorem 3.4 of \cite{za2012} we also know that the sum of the $\vec{s}$-column
degrees of the output kernel basis $\mathbf{N}$ is bounded by $\xi$. For the matrix multiplication $\mathbf{F}_{d}\mathbf{N}$, we have that the sum of the column degrees of $\mathbf{F}_{d}$ and the sum of the $\vec{s}$-column degrees of $\mathbf{N}$ are both bounded by $\xi$. Therefore Theorem \ref{thm:multiplyUnbalancedMatrices} (see also \cite[Theorem 3.7]{za2012}) applies and the multiplication can be done with a cost of $O^{\sim}\left(n^{\omega}s\right)$.

If we let the cost of Algorithm \prettyref{alg:hermiteDiagonal} be $g(n)$ for a input matrix of dimension $n$ then
\begin{eqnarray*}
g(n) & \in & O^{\sim}(n^{\omega} s)+g(\left\lceil n/2\right\rceil)+g(\left\lfloor n/2\right\rfloor ).
\end{eqnarray*}
As $s$ depends on $n$ we use $O^{\sim}\left(n^{\omega}s\right) = O^{\sim}\left(n^{\omega-1} \xi \right)$ with $\xi$ not depending on $n$. Then we solve the recurrence relation as
\begin{eqnarray*}
g(n) & \in & O^{\sim}(n^{\omega-1} \xi)+g(\left\lceil n/2\right\rceil)+g(\left\lfloor n/2\right\rfloor )\\
 & \in & O^{\sim}(n^{\omega-1} \xi )+2g(\left\lceil n/2\right\rceil )\\
 & \in & O^{\sim}(n^{\omega-1} \xi) = O^{\sim}(n^{\omega} s).
\end{eqnarray*}
\end{proof}

\section{\label{sec:hermite} Hermite Normal Form}

In \prettyref{sec:diagonals}, we have shown how to efficiently determine the diagonal entries
of the Hermite normal form of a nonsingular input matrix $\mathbf{F}\in\mathbb{K}\left[x\right]^{n\times n}$. In this section we show how to determine the remaining entries for the complete Hermite form $\mathbf{H}$ of $\mathbf{F}$. Our approach is similar to that used in the randomized algorithm of \cite{GS2011} but does not use the Smith normal form. Our algorithm has the advantage of being both efficient and deterministic.

\subsection{Hermite Form via Minimal Kernel Bases}\label{subsec:1}

For simplicity, let us assume that $\mathbf{F}$ is already column reduced, something which we can do with complexity $O^{\sim}\left(n^{\omega}s\right)$ using the column basis algorithm of \cite{za2013}. Assume that $\mathbf{F}$ has column degrees $\vec{d}~=~\left[d_{1},\dots,d_{n}\right]$ and that $\vec{s}=\left[s_{1},\dots,s_{n}\right]$ are the degrees of the diagonal entries of the Hermite form $\mathbf{H}$. Let $d_{\max} =\max_{1\leq i \leq n} {d}_i$ and $s_{\max} =\max_{1\leq i \leq n} {s}_i$ and set 
$\vec{u}=\left[s_{\max} , \dots, s_{\max} \right]$, a vector  with
$n$ entries. The following lemma implies that we can obtain the Hermite normal form of $\mathbf{F}$ from a $\left[-\vec{u},-\vec{s}\right]$-minimal kernel basis of $\left[\mathbf{F}, - I \right]  \in \mathbb{K}\left[x\right]^{n\times 2n} $.
\begin{lem}
Suppose $\mathbf{N} \in \mathbb{K}\left[x\right]^{2n\times n}$ is a $\left[-\vec{u},-\vec{s}\right]$-minimal kernel basis of $\left[\mathbf{F},-I\right]$, partitioned 
as $\mathbf{N} := \begin{bmatrix}\mathbf{N}_u\\
\mathbf{N}_d
\end{bmatrix}$  
where each block is $n\times n$ square. Then $\mathbf{N}_u$ is unimodular while $\mathbf{N}_d$ has row degrees $\vec{s}$ and is unimodularly equivalent to the Hermite normal form $\mathbf{H}$ of $\mathbf{F}$.
\end{lem}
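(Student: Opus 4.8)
The plan is to pin down the kernel of $\left[\mathbf{F},-I\right]$ explicitly, transfer degree information from a carefully chosen minimal kernel basis to the given $\mathbf{N}$, and finally upgrade a row-degree \emph{bound} on $\mathbf{N}_d$ to an \emph{equality} by a determinant-degree count. First I would observe that $\begin{bmatrix}\mathbf{p}\\\mathbf{q}\end{bmatrix}$ lies in the kernel of $\left[\mathbf{F},-I\right]$ exactly when $\mathbf{q}=\mathbf{F}\mathbf{p}$, so the kernel is the column module of $\begin{bmatrix}I\\\mathbf{F}\end{bmatrix}$, and this matrix is itself a kernel basis. Since any two kernel bases are unimodularly equivalent, there is a unimodular $\mathbf{W}$ with $\mathbf{N}=\begin{bmatrix}I\\\mathbf{F}\end{bmatrix}\mathbf{W}$, whence $\mathbf{N}_u=\mathbf{W}$ is unimodular and $\mathbf{N}_d=\mathbf{F}\mathbf{W}$. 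Writing $\mathbf{F}=\mathbf{H}\mathbf{U}^{-1}$ from $\mathbf{F}\mathbf{U}=\mathbf{H}$ gives $\mathbf{N}_d=\mathbf{H}\,(\mathbf{U}^{-1}\mathbf{W})$ with $\mathbf{U}^{-1}\mathbf{W}$ unimodular, so $\mathbf{N}_d$ is unimodularly equivalent to $\mathbf{H}$. This settles the first and third assertions; all the real work is in the row degrees of $\mathbf{N}_d$.

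For those, I would introduce the explicit candidate basis $\mathbf{M}=\begin{bmatrix}\mathbf{U}\\\mathbf{H}\end{bmatrix}=\begin{bmatrix}I\\\mathbf{F}\end{bmatrix}\mathbf{U}$, which is again a kernel basis, and compute its $\left[-\vec{u},-\vec{s}\right]$-column degrees. In the bottom $\mathbf{H}$-block, column $j$ has $\deg h_{jj}-s_j=0$ on the diagonal and $\deg h_{ij}-s_i<0$ off it, while the top block contributes $\deg\mathbf{U}_{ij}-s_{\max}$. Applying the predictable-degree property \prettyref{lem:predictableDegree} to the column-reduced $\mathbf{F}$ gives $\cdeg_{\vec{d}}\mathbf{U}=\cdeg(\mathbf{F}\mathbf{U})=\cdeg\mathbf{H}\le s_{\max}$, and since $\vec{d}\ge\vec{0}$ this forces $\deg\mathbf{U}_{ij}\le s_{\max}$, so every top entry contributes at most $0$. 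Hence $\cdeg_{[-\vec{u},-\vec{s}]}\mathbf{M}=\vec{0}$, and the $\left[-\vec{u},-\vec{s}\right]$-leading coefficient matrix of $\mathbf{M}$ has the identity $I_n$ as its bottom block (the monic diagonal of $\mathbf{H}$); it therefore has full column rank, so $\mathbf{M}$ is $\left[-\vec{u},-\vec{s}\right]$-column reduced, i.e. a minimal kernel basis.

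Next I would transfer this to $\mathbf{N}$. As $\mathbf{M}$ and $\mathbf{N}$ are both $\left[-\vec{u},-\vec{s}\right]$-minimal kernel bases they are related by a unimodular $\mathbf{Q}$, and applying \prettyref{lem:predictableDegree} in both directions (to $\mathbf{N}=\mathbf{M}\mathbf{Q}$ and to $\mathbf{M}=\mathbf{N}\mathbf{Q}^{-1}$) together with a short determinant-degree count forces $\cdeg_{[-\vec{u},-\vec{s}]}\mathbf{N}=\vec{0}$ as well. Reading off the bottom block, $\cdeg_{[-\vec{u},-\vec{s}]}(\text{col}_j\mathbf{N})=0$ yields $\deg(\mathbf{N}_d)_{ij}\le s_i$ for all $i,j$, i.e. $\rdeg\mathbf{N}_d\le\vec{s}$. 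Finally, since $\mathbf{N}_d=\mathbf{F}\mathbf{W}$ with $\mathbf{W}$ unimodular, $\deg\det\mathbf{N}_d=\deg\det\mathbf{F}=\deg\det\mathbf{H}=\sum_i s_i$; combining the elementary bound $\deg\det\mathbf{N}_d\le\sum_i\rdeg_i\mathbf{N}_d$ with $\rdeg_i\mathbf{N}_d\le s_i$ gives $\sum_i s_i\le\sum_i\rdeg_i\mathbf{N}_d\le\sum_i s_i$, so every inequality is tight and $\rdeg\mathbf{N}_d=\vec{s}$ exactly.

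I expect the main obstacle to be the middle step: showing that the given minimal basis $\mathbf{N}$ inherits the all-zero shifted column degrees of the explicit $\mathbf{M}$, since this is what converts the qualitative statement "$\mathbf{N}_d$ is unimodularly equivalent to $\mathbf{H}$" into the quantitative bound $\rdeg\mathbf{N}_d\le\vec{s}$. Everything else — the kernel description and the determinant count — is routine once that bound is available. The one computation demanding care is the estimate $\deg\mathbf{U}_{ij}\le s_{\max}$, which is precisely where the hypothesis that $\mathbf{F}$ is column reduced enters.
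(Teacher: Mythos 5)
Your proof is correct and takes essentially the same route as the paper: the heart of both arguments is showing that $\mathbf{M}=\begin{bmatrix}\mathbf{U}\\\mathbf{H}\end{bmatrix}$ is a $\left[-\vec{u},-\vec{s}\right]$-minimal kernel basis with shifted column degrees $\vec{0}$ (using the predictable-degree property to bound $\deg\mathbf{U}$ by $s_{\max}$), and then transferring those shifted degrees to the given basis $\mathbf{N}$ via unimodular equivalence of minimal kernel bases. Your two departures --- establishing the unimodularity of $\mathbf{N}_u$ through the explicit kernel basis $\begin{bmatrix}I\\\mathbf{F}\end{bmatrix}$, and closing with a determinant-degree count where the paper appeals to the minimality of $\mathbf{H}$ --- are just more explicit justifications of steps the paper states tersely, not a different method.
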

\begin{proof}
Let $\mathbf{U}$ be the unimodular matrix satisfying $\mathbf{F} \cdot \mathbf{U}=\mathbf{H}$. By the predictable-degree property (\ref{lem:predictableDegree}), $\mathbf{U}$ has $\vec{d}$-column degrees bounded by $s_{\max}$, that is, $\cdeg_{-\vec{u}}\mathbf{U}~\le~0$. Letting $\mathbf{M} = \begin{bmatrix}\mathbf{U}\\
\mathbf{H}
\end{bmatrix} \in \mathbb{K}[x]^{2n \times n}$ 
we have that $\mathbf{M}$  is a kernel basis of $\left[\mathbf{F},-I\right]$ having shifted $[ - \vec{u},-\vec{s}]$ column degree $0$. Thus $\mathbf{M}$ is  $\left[-\vec{u},-\vec{s}\right]$-column reduced and hence is a $\left[-\vec{u},-\vec{s}\right]$-minimal
kernel basis of $\left[\mathbf{F},-I\right]$. 

The $\left[-\vec{u},-\vec{s}\right]$-minimal kernel basis  $\mathbf{N} $
of $\left[\mathbf{F},-I\right]$ is then unimodularly equivalent to $\mathbf{M}$. Thus the matrix
$\mathbf{N}_u$, making up the  upper $n$ rows, is unimodular and the matrix $\mathbf{N}_d$, consisting of the lower $n$ rows, is unimodularly equivalent to $\mathbf{H}$. Minimality also ensures that
 $\cdeg_{\left[-\vec{u},-\vec{s}\right]}\begin{bmatrix}\mathbf{N}_u\\
\mathbf{N}_d
\end{bmatrix}=\cdeg_{\left[-\vec{u},-\vec{s}\right]}\begin{bmatrix}\mathbf{U}\\
\mathbf{H}
\end{bmatrix}=0$, 
implying $\cdeg_{-\vec{s}}\mathbf{N}_d \le0=\cdeg_{-\vec{s}}\mathbf{H}$.
Similarly, the minimality of $\mathbf{H}$ then ensures that $\cdeg_{-\vec{s}}\mathbf{N}_d =\cdeg_{-\vec{s}}\mathbf{H}=0$,
or equivalently, $\rdeg\mathbf{N}_d=\rdeg\mathbf{H}=\vec{s}$.
\end{proof}
Knowing that the bottom $n$ rows $\mathbf{N}_d$ have the same row degrees as $\mathbf{H}$
and is unimodularly equivalent with $\mathbf{H}$ implies that the Hermite form
$\mathbf{H}$ can then be obtained from $\mathbf{N}_d$ using Lemma
8 from \cite{GS2011}. We restate this as follows:
\begin{lem}\label{lem:recoverH}
Let $\mathbf{H}\in\mathbb{K}[x]^{n\times n}$ be a column basis of a matrix $\mathbf{A}\in\mathbb{K}\left[x\right]^{n\times k}$ having the same row degrees as $\mathbf{A}$.
If $U\in\mathbb{K}^{n\times n}$ is the matrix putting $\lcoeff\left(x^{-\vec{s}}\mathbf{A}\right)$ into reduced column echelon form and if $\mathbf{H}$ is in Hermite normal form 
then $\mathbf{H}$ is the principal $n\times n$ submatrix of $\mathbf{A} \cdot U$.
\end{lem}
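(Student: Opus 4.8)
The plan is to show that the two hypotheses -- that $\mathbf{H}$ is a column basis of $\mathbf{A}$ and that $\mathbf{A}$ shares its row degrees with the Hermite form $\mathbf{H}$ -- force the polynomial multiplier relating $\mathbf{A}$ and $\mathbf{H}$ to be a \emph{constant} matrix. Once that is in hand, the whole statement collapses to linear algebra over $\mathbb{K}$. First, since $\mathbf{H}$ is a column basis of $\mathbf{A}$, the columns of $\mathbf{H}$ and of $\mathbf{A}$ generate the same $\mathbb{K}[x]$-module, so there exist polynomial matrices $\mathbf{V}$ and $\mathbf{W}$ with $\mathbf{A}=\mathbf{H}\mathbf{V}$ and $\mathbf{H}=\mathbf{A}\mathbf{W}$. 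Substituting gives $\mathbf{H}=\mathbf{H}\mathbf{V}\mathbf{W}$, and since $\mathbf{H}$ is an $n\times n$ column basis it is nonsingular, whence $\mathbf{V}\mathbf{W}=I_n$ and in particular $\mathbf{V}$ has full row rank $n$.

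The main step, and the one I expect to be the real obstacle, is to prove that $\mathbf{V}$ is in fact constant and equals $E:=\lcoeff\!\left(x^{-\vec{s}}\mathbf{A}\right)$. I would argue by induction on the row index $i$, exploiting the lower-triangular Hermite structure of $\mathbf{H}$: the diagonal entry $h_{ii}$ is monic of degree $s_i$, every off-diagonal $h_{il}$ with $l<i$ has degree $<s_i$, and $h_{il}=0$ for $l>i$. Writing the $i$-th row of $\mathbf{A}=\mathbf{H}\mathbf{V}$ as $(\mathbf{A})_{ij}=h_{ii}V_{ij}+\sum_{l<i}h_{il}V_{lj}$, and using the inductive hypothesis that rows $1,\dots,i-1$ of $\mathbf{V}$ are constant, the terms with $l<i$ contribute degree strictly below $s_i$. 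Because $\rdeg\mathbf{A}=\vec{s}$ forces the $i$-th row of $\mathbf{A}$ to have degree exactly $s_i$, any $V_{ij}$ of positive degree would create a term of degree $>s_i$; hence $\deg V_{ij}\le 0$ for all $j$, so row $i$ of $\mathbf{V}$ is constant. Reading off the coefficient of $x^{s_i}$ then shows this row coincides with row $i$ of $E$. The delicate point is to track the degrees carefully enough to conclude simultaneously that no higher term survives and that the surviving degree-$s_i$ coefficient is precisely $V_{ij}$.

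Once $\mathbf{A}=\mathbf{H}E$ with $E\in\mathbb{K}^{n\times k}$ constant is established, the remainder is immediate. From $\mathbf{V}\mathbf{W}=I_n$ (equivalently, from $\rank\mathbf{A}=n$ since $\mathbf{H}$ is a basis) the constant matrix $E=\mathbf{V}$ has full row rank $n$, so its reduced column echelon form is $[\,I_n\mid 0\,]$, attained by the constant $U$ with $EU=[\,I_n\mid 0\,]$. Then $\mathbf{A}U=\mathbf{H}EU=\mathbf{H}\,[\,I_n\mid 0\,]=[\,\mathbf{H}\mid 0\,]$, whose principal $n\times n$ submatrix is exactly $\mathbf{H}$, as claimed. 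In the square case $k=n$ relevant to the application this specializes to $U=E^{-1}$ and $\mathbf{A}U=\mathbf{H}$.

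I note that the proof never needs the leading-coefficient identity $\lcoeff(x^{-\vec{s}}\mathbf{H})=I$ explicitly; the triangular degree pattern of the Hermite form is used only to propagate the inductive degree bound on $\mathbf{V}$, and the exact (rather than merely bounded) equality of row degrees is what rules out a vanishing row of $\mathbf{V}$ and thereby keeps $E$ at full row rank. If one preferred a limit-style argument, an alternative route is to observe that $x^{-\vec{s}}\mathbf{H}=I+O(x^{-1})$ and to deduce constancy of the multiplier by letting $x\to\infty$; I would favour the inductive degree-counting version above, as it stays entirely within $\mathbb{K}[x]$ and makes the identification $\mathbf{V}=E$ transparent.
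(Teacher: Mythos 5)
Your proof is correct and takes essentially the same route as the paper's: the paper's terse argument is precisely that, because $\mathbf{H}$ and $\mathbf{A}$ share the row degrees $\vec{s}$, their relationship is governed by the constant leading coefficient matrices of $x^{-\vec{s}}\mathbf{A}$ and $x^{-\vec{s}}\mathbf{H}$, and your inductive degree count showing the multiplier $\mathbf{V}$ is constant and equals $\lcoeff\left(x^{-\vec{s}}\mathbf{A}\right)$ (with $\lcoeff\left(x^{-\vec{s}}\mathbf{H}\right)=I$ entering through monicity and triangularity) is exactly the detail the paper leaves implicit. The concluding echelon-form step ($EU=[\,I_n \mid 0\,]$, hence $\mathbf{A}U=[\,\mathbf{H} \mid 0\,]$) likewise matches the paper's intent.
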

\begin{proof}
This follows from the fact that $\mathbf{H}$ and $\mathbf{A}$ all
have uniform $-\vec{s}$ column degrees $0$. This allows their relationship
to be completely determined by the leading coefficient matrices of both $\left(x^{-\vec{s}}\mathbf{A}\right)$
and $\left(x^{-\vec{s}}\mathbf{H}\right)$, respectively.
\end{proof}

\begin{exmp}\label{example3}
Let
$$
\mathbf{F} = \left[ \begin {array}{ccc} 2\,{x}^{3}-2\,{x}^{2}+3\,x-3&-2\,{x}^{3}+2
\,{x}^{2}&-2\,x+2\\ \noalign{\medskip}{x}^{3}+3\,{x}^{2}-x+2&-2\,{x}^{
2}+x+1&-{x}^{3}-{x}^{2}-2\\ \noalign{\medskip}-3\,{x}^{3}+x-1&-x&1
\end {array} \right] 
$$
be a $3 \times 3$ nonsingular matrix over $\mathbb{Z}_7[x]$. As the leading column coefficient matrix is nonsingular, $\mathbf{F}$ is column reduced.  Using the method of Section \ref{sec:diagonals} one can determine that the diagonal elements of the Hermite form of $\mathbf{F}$ are $x-1, x+1, x^7 +1$, respectively.  Using \cite{za2012} one computes a kernel basis for $[\mathbf{F}, - \mathbf{I}]$ as
$$
\mathbf{N} = \left[ \begin {array}{cccccc} 1&1&1&x-1&1&-3\,{x}^{3}
\\ \noalign{\medskip}2\,{x}^{2}+1&2\,{x}^{2}+2&2\,{x}^{2}+1&x-1&x+2&{x
}^{5}-3\,{x}^{3}-x\\ \noalign{\medskip}2\,{x}^{4}&2\,{x}^{4}+{x}^{2}&2
\,{x}^{4}+1&-2\,x+2&-2&{x}^{7}-{x}^{3}+1\end {array} \right]^t .
$$
With shift $\vec w = [-7, -7, -7, -1, -1, -7]$ we have that
$$
\mbox{ lcoeff}_{\vec w} (\mathbf{N} ) = \left[ \begin {array}{cccrcc} 0 & 0 & 0 & 1 & 0 & 0\\ \noalign{\medskip} 0 & 0 & 0 & 1 & 1 & 0\\ \noalign{\medskip} 0 & 0 & 0 & -2 & 0 & 1 \end {array} \right]^t 
$$
which has full rank and hence $\mathbf{N}$ is a $\vec w$-minimal kernel basis.  If 
$$
\mathbf{N}_1 = \left[ \begin {array}{ccc} 1&2\,{x}^{2}+1&2\,{x}^{4}
\\ \noalign{\medskip}1&2\,{x}^{2}+2&2\,{x}^{4}+{x}^{2}
\\ \noalign{\medskip}1&2\,{x}^{2}+1&2\,{x}^{4}+1\end {array} \right] 
~~ \mbox{  and } ~~ 
\mathbf{N}_2 = \left[ \begin {array}{ccc} x-1&x-1&-2\,x+2\\ \noalign{\medskip}1&x+2&
-2\\ \noalign{\medskip}-3\,{x}^{3}&{x}^{5}-3\,{x}^{3}-x&{x}^{7}-{x}^{3
}+1\end {array} \right]
$$
denote the first and last three rows of $\mathbf{N}$, respectively, then $\mathbf{N}_1$
is unimodular, $\mathbf{F} \cdot \mathbf{N}_1 = \mathbf{N_2}$ and $\mathbf{N_2}$ has the
same row degrees as the Hermite form. Then
$$
U =  \left[ \begin {array}{crc} 1&-1&2\\ \noalign{\medskip}0&1&0
\\ \noalign{\medskip}0&0&1\end {array} \right]
$$ 
is the matrix which converts the leading coefficient matrix of $x^{ - (1,1,7)} \cdot \mathbf{N}_2$ into column echelon form. Therefore $\mathbf{U} = \mathbf{N}_1 \cdot U$  is unimodular and 
$$
\mathbf{F} \cdot \mathbf{U} = \left[ \begin {array}{ccc} x-1&0&0\\ \noalign{\medskip}1&x+1&0
\\ \noalign{\medskip}-3\,{x}^{3}&{x}^{5}-x&{x}^{7}+1\end {array}
 \right]
$$ 
is in Hermite normal form. \qed
\end{exmp}

\subsection{Reducing the degrees and the shift}\label{subsec:2}

Although the Hermite normal form $\mathbf{H}$ of $\mathbf{F}$ can be computed from a $\left[-\vec{u},-\vec{s}\right]$-minimal kernel basis of $\left[\mathbf{F}, - I\right]$, a major problem here is that $s_{\max}$ can be very large and hence is not efficient enough for our needs. In this subsection we show how to convert our normal form computation into one where the bounds are smaller. 

Since we know the row degrees of $\mathbf{H}$ we can expand each of the high degree rows of $\mathbf{H}$ into multiple rows each having lower degrees. This allows us to compute an alternative matrix $\mathbf{H}'$ having lower degrees, but with a higher row dimension that is still in $O(n)$, with $\mathbf{H}$ easily determined from this new $\mathbf{H}'$.
Such an approach was used in the Hermite form computation algorithm  in \cite{GS2011} and also in the case of the order basis algorithms from \cite{za2009,ZL2012}. Our task here is in fact easier as we already know the row degrees of $\mathbf{H}$.

For each entry $s_{i}$ of the shift $\vec{s}$, let $q_{i}$ and $r_{i}$ be the quotient and remainder of $s_{i}$ divided by $d_{\max}$. Any polynomial $h$ of degree at most $s_i$ can be written as
$$
h = h_0 + h_1 x^{s_i-q_i d_{\max}} + \cdots +  h_{q_i} x^{s_i - d_{\max}}
$$
with each of the $q_i+1$ components, $h_i$, having degree at most $d_{\max}$. Similarly the $i$th column of $\mathbf{H}$ can be written in terms of vector components each bounded in degree by $d_{\max}$. 

In terms of matrix identities let
\[
\tilde{\mathbf{E}}_{i}=\left[e_{i},x^{s_{i}-q_{i}d_{\max}}e_{i},\dots,x^{s_{i}-d_{\max}}e_{i}\right] 
\]
where $e_{i}$ is the $i$th column of the $n \times n$ identity matrix. For each $i$ $\tilde{\mathbf{E}}_{i}$ has
$q_{i}+1$ entries and combines into a single matrix 
\begin{eqnarray*}
\mathbf{E} & = & [\tilde{\mathbf{E}}_1,\dots,\tilde{\mathbf{E}}_{n}]\\ & & \\
 & = & \left[\begin{array}{cccc|cc|cccccc}
1 & x^{s_{1}-q_{1}d_{\max}} & \cdots  & x^{s_{1}-d_{\max}} & & \\
\hline 
 &  &  &  &   \ddots & & \\
 &  &  &  &   & \ddots\\
\hline 
 &  &  &  &  & &  & 1 & x^{s_{n}-q_{n}d_{\max}} & \cdots  & x^{s_{n}-d_{\max}}
\end{array}\right]_{n\times\bar{n}}
\end{eqnarray*}
with column dimension $\bar{n}= n+\sum_{i=1}^{n}q_{i}.$ 
Then there exists a matrix polynomial $\mathbf{H_E} \in \mathbb{K} [x]^{\bar{n} \times n}$ such that
$\mathbf{E} \cdot \mathbf{H_E}  = \mathbf{H}$. The shift in this case is given by
\begin{eqnarray*}
\vec{s^*} & = & [\vec{s^*}_{1},\dots,\vec{s^*}_{n}]~~~ \mbox{ where } ~~ 
\vec{s^*}_i  =   [r_i , d_{\max}, \ldots , d_{\max}] ~~\mbox{  with  } {q_i+1} \mbox{  components }.
\end{eqnarray*}

Since $\mathbf{F}$ is column reduced we have that $\sum_{i=1}^{n} s_i = \sum_{i=1}^{n} d_i \le n d_{\max}$ and hence the column dimension $\bar{n}$ satisfies 
\[
\bar{n}=n+\sum_{i=1}^{n}q_{i}\le n+\sum_{i=1}^{n} \frac{s_{i}}{d_{\max}}\le 2n .
\]

Rather than recovering $\mathbf{H}$ from a  $\left[-\vec{u},-\vec{s}\right]$-minimal kernel basis for $[ \mathbf{F}, -\mathbf{I} ]$ the following lemma implies that we can now recover the Hermite form from a $\left[-\vec{u},-\vec{s^*}\right]$-minimal
kernel basis of $\left[\mathbf{F},-\mathbf{E}\right]$.
\begin{lem}\label{lem:expandH}
Let $\mathbf{N}$ be a $\left(\left[\mathbf{F},-\mathbf{E}\right],\left[-\vec{u},-\vec{s^*}\right]\right)$-kernel
basis, partitioned as $\mathbf{N}=\begin{bmatrix}\mathbf{N}_u\\
\mathbf{N}_d
\end{bmatrix}$ with $\mathbf{N}_d$ of dimension $\bar{n}\times\bar{n}$.
Let $\bar{\mathbf{N}}$ be the submatrix consisting of the columns
of $\mathbf{N}_d$ whose $-\vec{s^*}$-column degrees are bounded by
$0$. Then $\mathbf{H}$ is a column basis of $\mathbf{E} \cdot \bar{\mathbf{N}}$ having $-\vec{s}$-column degrees $0$.
\end{lem}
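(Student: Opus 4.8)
The plan is to exhibit an explicit, structured kernel basis of $\left[\mathbf{F},-\mathbf{E}\right]$ whose $\left[-\vec{u},-\vec{s^*}\right]$-column degrees are all nonpositive, and then to transfer its properties to the minimal basis $\mathbf{N}$ by unimodular equivalence. First I would record the identity tying everything together: since $\mathbf{F}\mathbf{U}=\mathbf{H}=\mathbf{E}\mathbf{H_E}$, the matrix $\begin{bmatrix}\mathbf{U}\\\mathbf{H_E}\end{bmatrix}$ lies in the kernel of $\left[\mathbf{F},-\mathbf{E}\right]$. This kernel has rank $\bar{n}$, since $\left[\mathbf{F},-\mathbf{E}\right]$ has full row rank $n$ and $n+\bar{n}$ columns, so these $n$ columns are not yet a basis; the missing $\bar{n}-n$ directions are supplied by $\begin{bmatrix}0\\\mathbf{K}\end{bmatrix}$, where $\mathbf{K}$ is a kernel basis of $\mathbf{E}$. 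I would then check that
\[
\mathbf{R}=\begin{bmatrix}\mathbf{U} & 0\\\mathbf{H_E} & \mathbf{K}\end{bmatrix}
\]
is a kernel basis of $\left[\mathbf{F},-\mathbf{E}\right]$: independence is immediate from the unimodularity of $\mathbf{U}$, and generation follows by writing any kernel element $\begin{bmatrix}\mathbf{a}\\\mathbf{b}\end{bmatrix}$ as $\mathbf{a}=\mathbf{U}\mathbf{c}$ (possible since $\mathbf{U}$ is unimodular) and observing that $\mathbf{E}\left(\mathbf{b}-\mathbf{H_E}\mathbf{c}\right)=0$, so $\mathbf{b}-\mathbf{H_E}\mathbf{c}\in\ker\mathbf{E}$.

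Next I would bound the $\left[-\vec{u},-\vec{s^*}\right]$-column degrees of $\mathbf{R}$. The columns $\begin{bmatrix}\mathbf{U}\\\mathbf{H_E}\end{bmatrix}$ have nonpositive shifted degree because $\cdeg_{-\vec{u}}\mathbf{U}\le0$ by the predictable-degree property (Lemma~\ref{lem:predictableDegree}), exactly as established earlier for $\mathbf{U}$, and $\cdeg_{-\vec{s^*}}\mathbf{H_E}\le0$ by the construction of $\mathbf{H_E}$. For the columns $\begin{bmatrix}0\\\mathbf{K}\end{bmatrix}$, I would choose the natural generators of $\ker\mathbf{E}$ block by block, namely those expressing that $x^{r_i}$ times the first column of $\tilde{\mathbf{E}}_i$ equals its second and that $x^{d_{\max}}$ times the $(j{-}1)$th column equals the $j$th, and verify directly that each has $-\vec{s^*}$-column degree $0$; here the matching between the exponents appearing in $\mathbf{E}$ and the entries $\left[r_i,d_{\max},\dots,d_{\max}\right]$ of $\vec{s^*}$ is precisely what makes these degrees vanish. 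Thus every column of $\mathbf{R}$ has nonpositive shifted column degree. Since $\mathbf{N}$ is $\left[-\vec{u},-\vec{s^*}\right]$-minimal and $\mathbf{R}$ is a kernel basis with all shifted column degrees $\le0$, the extremal property of reduced bases — that a reduced basis minimizes the sorted list of shifted column degrees over all bases of the module — forces every column of $\mathbf{N}$, hence of $\mathbf{N}_d$, to have shifted column degree $\le0$ as well; in particular $\cdeg_{-\vec{s^*}}\le0$ holds for every column of $\mathbf{N}_d$, so the selection defining $\bar{\mathbf{N}}$ retains all of $\mathbf{N}_d$.

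The remaining step connects $\mathbf{E}\bar{\mathbf{N}}$ to $\mathbf{H}$. Writing $\mathbf{N}=\mathbf{R}\mathbf{V}$ for a unimodular $\mathbf{V}$ (any two kernel bases are unimodularly equivalent), I compute
\[
\mathbf{E}\mathbf{N}_d=\mathbf{E}\begin{bmatrix}\mathbf{H_E} & \mathbf{K}\end{bmatrix}\mathbf{V}=\begin{bmatrix}\mathbf{H} & 0\end{bmatrix}\mathbf{V},
\]
using $\mathbf{E}\mathbf{H_E}=\mathbf{H}$ and $\mathbf{E}\mathbf{K}=0$; the columns of the right-hand side generate exactly the column module of $\mathbf{H}$. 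Combined with $\bar{\mathbf{N}}=\mathbf{N}_d$, this shows that the column module of $\mathbf{E}\bar{\mathbf{N}}$ equals that of $\mathbf{H}$; since $\mathbf{H}$ is square and nonsingular its columns form a basis of that module, so $\mathbf{H}$ is a column basis of $\mathbf{E}\bar{\mathbf{N}}$. Finally, the bound $\cdeg_{-\vec{s^*}}\mathbf{b}\le0$ on a selected column $\mathbf{b}$ propagates through $\mathbf{E}$ to $\cdeg_{-\vec{s}}\left(\mathbf{E}\mathbf{b}\right)\le0$ — each block of $\mathbf{E}$ reassembles its bounded pieces into a polynomial of degree at most $s_i$ in coordinate $i$ — while $\mathbf{H}$ itself has $\cdeg_{-\vec{s}}\mathbf{H}=0$ since its row degrees are $\vec{s}$, which gives the claimed $-\vec{s}$-column degrees.

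The step I expect to be the main obstacle is the second paragraph: showing that the minimal basis $\mathbf{N}$ inherits the nonpositivity of the shifted column degrees from the structured basis $\mathbf{R}$. This is exactly where one must invoke the extremal property of shifted minimal (column-reduced) bases — that a reduced basis minimizes the entire sorted list of shifted column degrees, not merely their sum — rather than settling for a weaker sum bound; the supporting technical point is the degree computation for the $\ker\mathbf{E}$ generators, which hinges on the precise radix structure of $\mathbf{E}$ relative to $\vec{s^*}$.
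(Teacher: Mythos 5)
Your proof is correct, and its skeleton matches the paper's: both build the structured kernel basis $\begin{bmatrix}\mathbf{U} & 0\\ \mathbf{H}_{\mathbf{E}} & \mathbf{K}\end{bmatrix}$ of $\left[\mathbf{F},-\mathbf{E}\right]$ (the paper's $\mathbf{M}$, with $\mathbf{K}=\mathbf{N}_{\mathbf{E}}$ an $\left(\mathbf{E},-\vec{s^*}\right)$-kernel basis) and both end with $\mathbf{E}\bar{\mathbf{N}}=\left[\mathbf{H},0\right]\mathbf{V}$. Where you genuinely diverge is the transfer step. The paper compares the two bases column-selection-wise: it takes the submatrices $\mathbf{M}_{0}$ and $\mathbf{N}_{0}$ of columns with $\left[-\vec{u},-\vec{s^*}\right]$-degree at most $0$ and asserts $\mathbf{M}_{0}\mathbf{V}=\mathbf{N}_{0}$; that equivalence tacitly requires $\mathbf{M}$ itself to be $\left[-\vec{u},-\vec{s^*}\right]$-column reduced, which the paper asserts but does not verify. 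You instead make $\mathbf{K}$ explicit (the radix generators, each of shifted degree exactly $0$), conclude that every column of your $\mathbf{R}$ has shifted degree $\le 0$, and invoke the extremal property of reduced bases --- which compares a reduced basis against an \emph{arbitrary} basis --- to deduce that every column of $\mathbf{N}$ has shifted degree $\le 0$. The selection then becomes vacuous ($\bar{\mathbf{N}}=\mathbf{N}_d$), and plain unimodular equivalence of two full bases suffices. This buys you: no reducedness claim for the companion basis, an actual proof that $\mathbf{R}$ is a kernel basis (the paper only asserts this for $\mathbf{M}$), and the stronger conclusion that no column of $\mathbf{N}_d$ is discarded (consistent with Example~\ref{example4}). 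What it costs: the shortcut is special to this lemma. In Lemma~\ref{lem:reduceToRemainder}, whose proof the paper runs by ``the same argument,'' the companion basis contains the block $-\mathbf{Q}\mathbf{N}_{\mathbf{E}}$, whose degree (as large as roughly $s_{\max}$) far exceeds the shift entries $2d_{\max}$; there the minimal basis genuinely can have columns of positive shifted degree, the column selection is essential, and only the paper's selection-based equivalence survives. So your argument proves this lemma but would need to be replaced, not adapted, for the next one.

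One loose end: the claim that the nonzero columns of $\mathbf{E}\bar{\mathbf{N}}$ have $-\vec{s}$-column degree exactly $0$, which Remark~\ref{rmk1} relies on. You establish the upper bound $\le 0$ and note that $\cdeg_{-\vec{s}}\mathbf{H}=0$, but the lower bound needs one more sentence: each nonzero column of $\mathbf{E}\bar{\mathbf{N}}$ equals $\mathbf{H}\mathbf{p}$ for some nonzero polynomial vector $\mathbf{p}$, and since $\mathbf{H}$ is $-\vec{s}$-column \emph{reduced} with column degrees $0$, the predictable-degree property gives $\cdeg_{-\vec{s}}\left(\mathbf{H}\mathbf{p}\right)=\deg\mathbf{p}\ge 0$. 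This reducedness of $\mathbf{H}$, not merely its degree list, is the ingredient the paper invokes to close its proof, and you should invoke it too.
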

\begin{proof}
Let $\mathbf{U}$ be the unimodular matrix satisfying $\mathbf{F} \cdot \mathbf{U}=\mathbf{H}$ and
$\mathbf{H}_{\mathbf{E}}$ the matrix $\mathbf{H}$ expanded according
to $\mathbf{E}$. Then $\mathbf{H}_{\mathbf{E}}$ has row degrees $\vec{s^*}$ and $\mathbf{E} \cdot \mathbf{H}_{\mathbf{E}}=\mathbf{H}$. Set 
\[
\mathbf{M}=\begin{bmatrix}\mathbf{U} & 0\\
\mathbf{H}_{\mathbf{E}} & \mathbf{N}_{\mathbf{E}}
\end{bmatrix}
\]
where $\mathbf{N}_{\mathbf{E}}$ is a $\left(\mathbf{E},-\vec{s^*}\right)$-kernel
basis.
Then $\mathbf{M}$ is a $\left[-\vec{u},-\vec{s^*}\right]$ kernel basis for 
$\left[\mathbf{F},-\mathbf{E}\right]$.

Let $\mathbf{N}=\begin{bmatrix}\mathbf{N}_u\\
\mathbf{N}_d
\end{bmatrix}$ be a $\left(\left[\mathbf{F},-\mathbf{E}\right],\left[-\vec{u},-\vec{s^*}\right]\right)$-kernel
basis. Then the matrices $\mathbf{M}_{0}$ and $\mathbf{N}_{0}$ consisting
of the columns from $\mathbf{M}$ and $\mathbf{N}$, respectively,
whose $\left[-\vec{u},-\vec{s^*}\right]$-column degrees are bounded
by $0$, are unimodularly equivalent, that is, $\mathbf{M}_{0} \cdot \mathbf{V}=\mathbf{N}_{0}$
for some unimodular matrix $\mathbf{V}$. As a result, the matrices
$\bar{\mathbf{M}}$ and $\bar{\mathbf{N}}$ consisting of the
bottom $\bar{n}$ rows of $\mathbf{M}_{0}$ and $\mathbf{N}_{0}$
respectively, also satisfies $\bar{\mathbf{M}}\mathbf{V}=\bar{\mathbf{N}}$.
Thus $\mathbf{E} \cdot \bar{\mathbf{M}} \mathbf{V}=\mathbf{E}\bar{\mathbf{N}} $,
with $\cdeg_{-\vec{s}}\mathbf{E} \cdot \bar{\mathbf{M}} \le 0$ and $\cdeg_{-\vec{s}}\mathbf{E}\bar{\mathbf{N}}\le0$,
since $\cdeg_{-\vec{s^*}}\bar{\mathbf{M}}\le0$ and $\cdeg_{-\vec{s^*}}\bar{\mathbf{N}}\le0$.
Let $\bar{\mathbf{M}}=\left[\mathbf{H}_{\mathbf{E}},\mathbf{N}'\right]$,
where $\mathbf{N}'$ consists of the columns of $\mathbf{N}_{E}$
with $-\vec{s^*}$-column degrees bounded by $0$. Then 
$$
\mathbf{E} \cdot \bar{\mathbf{M}} \mathbf{U}_{0}=\mathbf{E}\left[\mathbf{H}_{\mathbf{E}},\mathbf{N}'\right]\mathbf{V}=\left[\mathbf{H},0\right]\mathbf{V}=\mathbf{E}\bar{\mathbf{N}}.
$$
Since $\mathbf{H}$ is $-\vec{s}$-column reduced and has $-\vec{s}$-column
degrees $0$, the nonzero columns of $\mathbf{E} \cdot \bar{\mathbf{N}}$
must have $-\vec{s}$-column degrees no less than $0$, so their
$-\vec{s}$-column degrees are equal to $0$.
\end{proof}

\begin{rem}\label{rmk1}
Note that the nonzero columns of $\mathbf{E} \cdot \bar{\mathbf{N}}$ having
$-\vec{s}$-column degrees $0$ allows us to recover the Hermite form $\mathbf{H}$ of $\mathbf{F}$
from $\mathbf{E} \cdot \bar{\mathbf{N}}$ using \prettyref{lem:recoverH}
\end{rem}

\begin{exmp}\label{example4}
Let $\mathbf{F}$ be the $3 \times 3$ matrix over $\mathbb{Z}_7[x]$ from Example \ref{example3}. In this case $d_{\max} = 3$, $q_1 =  q_2 = 0$, $r_1 = r_2 = 1$ and $q_3 = 2, r_3 =1$ so that
$$
\mathbf{E} = \left[ \begin{array}{ccccc} 1 & 0 & 0 & 0 & 0 \\
0 & 1 & 0 & 0 & 0 \\
0 & 0 & 1 & x & x^4 \end{array} \right]
$$
with shift $\vec{ s^*} = (1, 1, 1, 3, 3)$. Then a $[ - \vec{u}, - \vec{s^*} ]$-minimal kernel basis for $[ \mathbf{F}, - \mathbf{E} ]$ is given by the $8 \times 5$ matrix
$$
\mathbf{N} = 
\left[ 
\begin{array}{ccc|ccccc} 1&1&1&x-1&1&0& -3 x^2 & 0\\ 
\noalign{\medskip}2\,{x}^{2}+1&2\,{x}^{2}+2&2\,{x}^{2}+1&x-1&x+2&0 & -3\,{x}^{2}-1 & x \\ 
\noalign{\medskip}2\,{x}^{4}&2\,{x}^{4}+{x}^{2}&2
\,{x}^{4}+1&-2\,x+2&-2&1 & - x^2 & x^3 
\\
0 & 0 & 0 & 0 & ~0 & -x & 1 & 0 \\
0 & 0 & 0 & 0 & ~0 & 0 & -x^3 & 1 
\end{array} \right]^t ~.
$$
The matrix $\bar{\mathbf{N}}$, the bottom $5$ rows of $\mathbf{N}$ having shifted degrees less than
$0$, then gives
$$
\mathbf{E} \cdot \bar{\mathbf{N}} = 
 \left[ \begin {array}{ccccc} x-1&x-1&-2\,x+2 & 0 & 0 \\ \noalign{\medskip}1&x+2&
-2 & 0 & 0 \\ \noalign{\medskip}-3\,{x}^{3}&{x}^{5}-3\,{x}^{3}-x&{x}^{7}-{x}^{3
}+1 & 0 & 0\end {array} \right]
$$
with the column space consisting of the first $3$ columns. As in Example \ref{example3} the Hermite form is then determined by reduction of the shifted leading coefficient matrix of this column space into column echelon form.
\qed
\end{exmp}

\subsection{Efficient Remainder Computation}

The problem of computing a $\left(\left[\mathbf{F},I\right],\left[-\vec{u},-\vec{s}\right]\right)$-kernel
basis from Subsection \ref{subsec:1} has now been reduced to computing a $\left(\left[\mathbf{F},\mathbf{E}\right],\left[-\vec{u},-\vec{s^*}\right]\right)$-kernel
basis in Subsection \ref{subsec:2}. However, the degree of $\mathbf{E}$ and the shift $\vec{u}$
are still too large to guarantee efficient computation. In order to lower these further,
we can reduce $\mathbf{E}$ against $\mathbf{F}$ in a way similar to \cite{GS2011}. 
\begin{lem}\label{lem:reduceToRemainder} 
Suppose
$$\mathbf{E} ~ = ~\mathbf{F} \cdot \mathbf{Q} + \mathbf{R}$$ where $\mathbf{Q}$ and $\mathbf{R}$ are matrix polynomials and where $\mathbf{R}$ has degree less than $d_{\max}$. Let $\vec{u^*}=\left[2d_{\max},\dots,2d_{\max}\right]\in\mathbb{Z}^{n}$
and  $\mathbf{N}=\begin{bmatrix}{\mathbf{N}_u}\\
\mathbf{N}_d
\end{bmatrix}$ be a $\left(\left[\mathbf{F},-\mathbf{R}\right],\left[-\vec{u^*},-\vec{s^*}\right]\right)$-kernel
basis where the block $\mathbf{N}_d$ has dimension $n \times n$. Let
$\bar{\mathbf{N}}$ be the matrix consisting of the columns
of $\mathbf{N}_d$ whose $-\vec{s^*}$-column degrees are bounded by
$0$. Then $\mathbf{H}$ is a column basis of $\mathbf{E} \cdot \bar{\mathbf{N}}$
and the nonzero columns of $\mathbf{E} \cdot \bar{\mathbf{N}}$ have
$-\vec{s}$-column degrees $0$. 
\end{lem}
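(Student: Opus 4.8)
The plan is to reduce this statement to \prettyref{lem:expandH} by exhibiting an explicit unimodular change of variables that carries the kernel of $\left[\mathbf{F},-\mathbf{R}\right]$ onto the kernel of $\left[\mathbf{F},-\mathbf{E}\right]$ while leaving the bottom block of every kernel vector untouched. Concretely, from $\mathbf{E}=\mathbf{F}\mathbf{Q}+\mathbf{R}$ one checks the identity
\[
\left[\mathbf{F},-\mathbf{R}\right]=\left[\mathbf{F},-\mathbf{E}\right]\begin{bmatrix}\mathbf{I}&\mathbf{Q}\\0&\mathbf{I}\end{bmatrix},
\]
so that $T=\begin{bmatrix}\mathbf{I}&\mathbf{Q}\\0&\mathbf{I}\end{bmatrix}$ is unimodular and sends any kernel basis $\mathbf{N}$ of $\left[\mathbf{F},-\mathbf{R}\right]$ to a kernel basis $T\mathbf{N}=\begin{bmatrix}\mathbf{N}_u+\mathbf{Q}\mathbf{N}_d\\\mathbf{N}_d\end{bmatrix}$ of $\left[\mathbf{F},-\mathbf{E}\right]$ with the \emph{same} bottom block $\mathbf{N}_d$. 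Dually, transporting through $T^{-1}$ the distinguished kernel basis $\mathbf{M}=\begin{bmatrix}\mathbf{U}&0\\\mathbf{H}_{\mathbf{E}}&\mathbf{N}_{\mathbf{E}}\end{bmatrix}$ built in the proof of \prettyref{lem:expandH} (where $\mathbf{F}\mathbf{U}=\mathbf{H}$, $\mathbf{E}\mathbf{H}_{\mathbf{E}}=\mathbf{H}$, and $\mathbf{N}_{\mathbf{E}}$ is a $\left(\mathbf{E},-\vec{s^*}\right)$-kernel basis) produces $\mathbf{M}'=\begin{bmatrix}\mathbf{U}-\mathbf{Q}\mathbf{H}_{\mathbf{E}}&-\mathbf{Q}\mathbf{N}_{\mathbf{E}}\\\mathbf{H}_{\mathbf{E}}&\mathbf{N}_{\mathbf{E}}\end{bmatrix}$, a kernel basis of $\left[\mathbf{F},-\mathbf{R}\right]$ whose bottom block records $\mathbf{H}_{\mathbf{E}}$ in exactly the same position.

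The crux is the degree estimate justifying the replacement of the large shift $\vec{u}=\left[s_{\max},\dots\right]$ by $\vec{u^*}=\left[2d_{\max},\dots\right]$. The distinguished columns $\begin{bmatrix}\mathbf{U}-\mathbf{Q}\mathbf{H}_{\mathbf{E}}\\\mathbf{H}_{\mathbf{E}}\end{bmatrix}$ of $\mathbf{M}'$ satisfy $\mathbf{F}\left(\mathbf{U}-\mathbf{Q}\mathbf{H}_{\mathbf{E}}\right)=\mathbf{R}\mathbf{H}_{\mathbf{E}}$. Since $\deg\mathbf{R}<d_{\max}$ and every entry of $\mathbf{H}_{\mathbf{E}}$ has degree at most $d_{\max}$ (its row degrees being $\vec{s^*}$), the column degrees of $\mathbf{R}\mathbf{H}_{\mathbf{E}}$ are strictly below $2d_{\max}$. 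Applying the predictable-degree property (\prettyref{lem:predictableDegree}) to the column-reduced $\mathbf{F}$ with $\cdeg\mathbf{F}=\vec{d}$ then forces every entry of $\mathbf{U}-\mathbf{Q}\mathbf{H}_{\mathbf{E}}$ to have degree below $2d_{\max}$, i.e. $\cdeg_{-\vec{u^*}}\left(\mathbf{U}-\mathbf{Q}\mathbf{H}_{\mathbf{E}}\right)\le0$; combined with $\cdeg_{-\vec{s^*}}\mathbf{H}_{\mathbf{E}}\le0$ this shows these columns have $\left[-\vec{u^*},-\vec{s^*}\right]$-column degree $\le0$, so $\vec{u^*}$ is large enough to retain precisely the columns encoding $\mathbf{H}_{\mathbf{E}}$. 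Running the same computation on an arbitrary kernel vector $\begin{bmatrix}\mathbf{p}\\\mathbf{q}\end{bmatrix}$ with $\cdeg_{-\vec{s^*}}\mathbf{q}\le0$ shows $\cdeg_{-\vec{u^*}}\mathbf{p}\le0$ automatically; hence over the kernel, selecting columns by the bottom $-\vec{s^*}$-degree, as in the statement, coincides with selecting by the full $\left[-\vec{u^*},-\vec{s^*}\right]$-degree.

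With these facts I would finish exactly as in \prettyref{lem:expandH}. Using the minimality of the given $\mathbf{N}$ together with the predictable-degree property, the columns of $\mathbf{M}'$ and of $\mathbf{N}$ whose $\left[-\vec{u^*},-\vec{s^*}\right]$-column degree is bounded by $0$ generate the same $\mathbb{K}[x]$-module and are therefore unimodularly equivalent, say $\mathbf{M}'_{0}\mathbf{V}=\mathbf{N}_{0}$ with $\mathbf{V}$ unimodular; restricting to the bottom block gives $\bar{\mathbf{M}}\mathbf{V}=\bar{\mathbf{N}}$. Writing $\bar{\mathbf{M}}=\left[\mathbf{H}_{\mathbf{E}},\mathbf{N}'\right]$ and multiplying on the left by $\mathbf{E}$, the relations $\mathbf{E}\mathbf{H}_{\mathbf{E}}=\mathbf{H}$ and $\mathbf{E}\mathbf{N}'=0$ (the columns $\mathbf{N}'$ lying in a kernel basis of $\mathbf{E}$) yield $\left[\mathbf{H},0\right]\mathbf{V}=\mathbf{E}\bar{\mathbf{N}}$. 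Since $\mathbf{H}$ is $-\vec{s}$-column reduced with $-\vec{s}$-column degrees $0$, the nonzero columns of $\mathbf{E}\bar{\mathbf{N}}$ must have $-\vec{s}$-column degrees exactly $0$ and span the column space of $\mathbf{H}$, so $\mathbf{H}$ is a column basis of $\mathbf{E}\bar{\mathbf{N}}$ as claimed.

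The main obstacle I anticipate is the bookkeeping in this last step: verifying that the column-selection and unimodular-equivalence argument of \prettyref{lem:expandH} transfers verbatim under the reduced shift $\vec{u^*}$ and under the reduced matrix $\mathbf{R}$ in place of $\mathbf{E}$. The degree estimate of the second paragraph is precisely what makes this transfer legitimate, both by keeping the columns that reconstruct $\mathbf{H}$ inside the $\vec{u^*}$-bounded window and by making the cheap bottom-block degree test equivalent to the full shifted-degree test; careful attention to the block dimensions of $\mathbf{Q}$, $\mathbf{R}$ and $\mathbf{N}_d$ is the only remaining routine point.
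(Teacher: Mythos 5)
Your proposal is correct and follows essentially the same route as the paper: the identity $\left[\mathbf{F},-\mathbf{R}\right]=\left[\mathbf{F},-\mathbf{E}\right]\begin{bmatrix}\mathbf{I}&\mathbf{Q}\\0&\mathbf{I}\end{bmatrix}$, the transported kernel basis $\begin{bmatrix}\mathbf{U}-\mathbf{Q}\mathbf{H}_{\mathbf{E}} & -\mathbf{Q}\mathbf{N}_{\mathbf{E}}\\ \mathbf{H}_{\mathbf{E}} & \mathbf{N}_{\mathbf{E}}\end{bmatrix}$, and then the argument of \prettyref{lem:expandH} verbatim. Your second paragraph (using $\mathbf{F}(\mathbf{U}-\mathbf{Q}\mathbf{H}_{\mathbf{E}})=\mathbf{R}\mathbf{H}_{\mathbf{E}}$ and the predictable-degree property to justify the reduced shift $\vec{u^*}$) is a correct and welcome elaboration of a step the paper leaves implicit in its closing sentence.
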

\begin{proof}
Let $\mathbf{N}=\begin{bmatrix}\mathbf{U} & 0\\
\mathbf{H}_{\mathbf{E}} & \mathbf{N}_{\mathbf{E}}
\end{bmatrix}$ be the kernel basis of $\left[\mathbf{F},-\mathbf{E}\right]$ constructed in \prettyref{lem:expandH}. Then 
\[
\left[\mathbf{F},-\mathbf{R}\right]=\left[\mathbf{F},-\mathbf{E}\right]\begin{bmatrix} \mathbf{I_n} & \mathbf{Q}\\
0 & \mathbf{I_{\bar{n}}}
\end{bmatrix}
\]
implies that we can construct a kernel basis for $\left[\mathbf{F},-\mathbf{R}\right]$ via
\[
\mathbf{\hat{N}}=\begin{bmatrix} \mathbf{I_n} & -\mathbf{Q}\\
0 & \mathbf{I_{\bar{n}}}
\end{bmatrix}\mathbf{N}=\begin{bmatrix}\mathbf{U}-\mathbf{Q}\mathbf{H}_{\mathbf{E}} & -\mathbf{Q}\mathbf{N}_{E}\\
\mathbf{H}_{\mathbf{E}} & \mathbf{N}_{\mathbf{E}}
\end{bmatrix}.
\]
The proof then follows the same argument as given in Lemma \ref{lem:expandH}.
\end{proof}

As noted in Remark \ref{rmk1} the fact that the nonzero columns of $\mathbf{E} \cdot \bar{\mathbf{N}}$ have
$-\vec{s}$-column degrees $0$ allows us to recover $\mathbf{H}$
from $\mathbf{E} \cdot \bar{\mathbf{N}}$ using column echelon reduction of a constant matrix as noted in \prettyref{lem:recoverH}.

A $\left(\left[\mathbf{F},-\mathbf{R}\right],\left[-\bar{u},-\vec{s^*}\right]\right)$-kernel
basis from \prettyref{lem:reduceToRemainder} can now be computed and can then be used to recover the Hermite normal form. A big question remaining, however, is how to efficiently compute the
remainder $\mathbf{R}$ from $\mathbf{E}$ and $\mathbf{F}$. For this, we follow the approach used
for computing matrix gcds found in \cite{BL2000} and use  $\mathbf{F^*}$, the matrix polynomial having the reverse coefficients of $\mathbf{F}$:
\[
\mathbf{F^*}(x)  = \mathbf{F}(1/x) \cdot x^{\vec{d}} =  \mathbf{F}(1/x)\begin{bmatrix}x^{d_{1}}\\
 & \ddots\\
 &  & x^{d_{n}}
\end{bmatrix}.
\]
Efficient computation can then be done using the series expansions of the inverse of $\mathbf{F^*}$ as in the proof of Lemma 3.4 from \cite{Giorgi2003} and making use of the higher-order lifting algorithm from \cite{storjohann:2003}. Since $\mathbf{F}$ is assumed to be column reduced, $\deg\det\mathbf{F}=\sum\vec{d}$, and so the lowest order term of $\mathbf{F^*}$ is nonsingular. Therefore $x$ is not a factor of $\det\mathbf{F^*}$, which means the series expansion of $( \mathbf{F^*} )^{-1}$ always exists. This also implies that the $x^{d}$-adic lifting  algorithm from \cite{storjohann:2003} becomes deterministic. 

The division with remainder also takes advantage of the special structure of $\mathbf{E}$. In particular, its higher degree elements are all of the form $x^k e_i$ for $e_i$ a column of the identity. Consider now how the series expansion of $\left(\mathbf{F^*}\right)^{-1}$ gives a remainder of $x^{k}I$ divided by $\mathbf{F}$.
\begin{lem}\label{lem:remainder}
For any integer $k\ge d_{\max}$, let $\mathbf{G_k^*} = ( \mathbf{F^*})^{-1} \mod x^{k - \vec{d}}$ where $({\mathbf{F^*})^{-1}} \mod x^{k-\vec{d}}$ denotes the $i$-th row of the series expansion of ${(\mathbf{F^*})^{-1}}$ $\mod$ $x^{k-d_{i}}$ for each row $i$. Then 
\begin{equation}
I = \mathbf{F^*} \cdot \mathbf{G_k^*} + x^{k- d_{\max}}\mathbf{C_k^*}\label{eq:division}
\end{equation}
where 
$\mathbf{C_k} = x^{{d}_{\max}}\mathbf{C_k^*}(1/x) $ is a matrix polynomial having degree
less than ${d}_{\max}$ and satisfying $$x^{k}I=\mathbf{F} \cdot \mathbf{G_k} +\mathbf{C_k}$$
with $\mathbf{G_k} = x^{k - \vec{d}} \cdot \mathbf{G_k^*}(1/x)$ a matrix polynomial.
\end{lem}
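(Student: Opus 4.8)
The plan is to read \prettyref{eq:division} as the statement of what survives when the power series $(\mathbf{F^*})^{-1}$ is truncated, and then to recover the announced division of $x^{k}I$ by $\mathbf{F}$ through the substitution $x\mapsto 1/x$. First I would record the two structural facts that drive everything. Since $\mathbf{F}$ is column reduced with $\cdeg\mathbf{F}=\vec d$, the reversal $\mathbf{F^*}(x)=\mathbf{F}(1/x)\,x^{\vec d}$ is a genuine polynomial matrix whose constant term $\mathbf{F^*}(0)$ equals the leading column coefficient matrix of $\mathbf{F}$, hence is nonsingular; as already noted before the lemma, this is exactly what guarantees that $(\mathbf{F^*})^{-1}$ exists as a matrix power series. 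Writing $\mathbf{R}=(\mathbf{F^*})^{-1}-\mathbf{G_k^*}$ for the truncation tail and substituting $(\mathbf{F^*})^{-1}=\mathbf{G_k^*}+\mathbf{R}$ into the exact identity $\mathbf{F^*}(\mathbf{F^*})^{-1}=I$ gives $I=\mathbf{F^*}\mathbf{G_k^*}+\mathbf{F^*}\mathbf{R}$, so the whole content of \prettyref{eq:division} is to control the term $\mathbf{F^*}\mathbf{R}$.

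The next step is the degree and valuation bookkeeping. Column $j$ of $\mathbf{F^*}$ has degree at most $d_{j}$, while the truncation is arranged so that row $i$ of $\mathbf{G_k^*}$ retains the terms of $(\mathbf{F^*})^{-1}$ through degree $k-d_{i}$, whence row $i$ of $\mathbf{R}$ has valuation at least $k-d_{i}+1$. Combining these, every entry of $I-\mathbf{F^*}\mathbf{G_k^*}$ has degree at most $\max_{\ell}(d_{\ell}+(k-d_{\ell}))=k$, while every entry of $\mathbf{F^*}\mathbf{R}$ has valuation at least $\min_{\ell}(0+(k-d_{\ell}+1))=k-d_{\max}+1$. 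Hence $\mathbf{F^*}\mathbf{R}$ is divisible by $x^{k-d_{\max}}$, which simultaneously produces the polynomial matrix $\mathbf{C_k^*}=x^{-(k-d_{\max})}(I-\mathbf{F^*}\mathbf{G_k^*})$ of \prettyref{eq:division} and yields the two bounds $\val\mathbf{C_k^*}\ge 1$ and $\deg\mathbf{C_k^*}\le d_{\max}$.

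Finally I would undo the reversal. Substituting $x\mapsto 1/x$ in \prettyref{eq:division} and using $\mathbf{F^*}(1/x)=\mathbf{F}(x)\,x^{-\vec d}$ gives $I=\mathbf{F}\,x^{-\vec d}\mathbf{G_k^*}(1/x)+x^{-(k-d_{\max})}\mathbf{C_k^*}(1/x)$; multiplying through by the scalar $x^{k}$ and merging the diagonal factors ($x^{k}\cdot x^{-\vec d}=x^{k-\vec d}$) yields precisely $x^{k}I=\mathbf{F}\mathbf{G_k}+\mathbf{C_k}$ with $\mathbf{G_k}=x^{k-\vec d}\mathbf{G_k^*}(1/x)$ and $\mathbf{C_k}=x^{d_{\max}}\mathbf{C_k^*}(1/x)$. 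Polynomiality of $\mathbf{G_k}$ is then immediate, since row $i$ of $\mathbf{G_k^*}$ has degree at most $k-d_{i}$ and valuation at least $0$, so scaling that row by $x^{k-d_{i}}$ lands in nonnegative powers; and $\mathbf{C_k}$ is a polynomial of degree at most $d_{\max}-\val\mathbf{C_k^*}\le d_{\max}-1$, that is, strictly less than $d_{\max}$.

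The step I expect to be delicate is the valuation estimate on $\mathbf{R}$: the strict inequality $\deg\mathbf{C_k}<d_{\max}$ hinges on the tail having valuation $k-d_{i}+1$ rather than merely $k-d_{i}$ in row $i$, equivalently on the truncation retaining the degree-$(k-d_{i})$ coefficient of $(\mathbf{F^*})^{-1}$. Getting this boundary exactly right (equivalently, checking $\mathbf{C_k^*}(0)=0$) is what upgrades the bound from $\deg\mathbf{C_k}\le d_{\max}$ to $\deg\mathbf{C_k}<d_{\max}$; the remaining substitution and degree counting are routine.
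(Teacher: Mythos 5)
Your proof is correct and its core step is the same as the paper's: substitute $x\mapsto 1/x$ in (\ref{eq:division}), use $\mathbf{F^*}(1/x)=\mathbf{F}(x)\,x^{-\vec{d}}$, multiply by $x^{k}$, and read off $x^{k}I=\mathbf{F}\mathbf{G_k}+\mathbf{C_k}$ together with polynomiality of $\mathbf{G_k}$. The difference is in what gets justified, and here your version is strictly more complete. The paper's proof takes (\ref{eq:division}) itself for granted and disposes of the degree claim with the single clause ``by its definition, has degree less than $d_{\max}$''; it never shows that $I-\mathbf{F^*}\mathbf{G_k^*}$ is divisible by $x^{k-d_{\max}}$, let alone that an extra factor of $x$ survives to make the bound strict. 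Your valuation bookkeeping supplies both missing steps, and the ``delicate point'' you flag at the end is not merely delicate---it is precisely where the lemma, read with the standard convention that reduction $\bmod\ x^{m}$ keeps degrees strictly below $m$, actually fails. Under that convention row $i$ of the tail $(\mathbf{F^*})^{-1}-\mathbf{G_k^*}$ has valuation only $k-d_{i}$, and the coefficient of $x^{k-d_{\max}}$ in $I-\mathbf{F^*}\mathbf{G_k^*}$ picks up the products $\lcoeff(\mathbf{F}_{i\ell})\cdot\coeff\bigl([(\mathbf{F^*})^{-1}]_{\ell j},x,k-d_{\max}\bigr)$ over the columns $\ell$ with $d_{\ell}=d_{\max}$, which are nonzero in general, so that $\deg\mathbf{C_k}=d_{\max}$ genuinely occurs; the matrix of Example~\ref{example5} already exhibits this, since with $\mathbf{G_4^*}=(\mathbf{F^*})^{-1}\bmod x$ one computes $(\mathbf{C_4})_{11}=x^{3}$, of degree $d_{\max}=3$. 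Thus the inclusive truncation you adopt---each row of $(\mathbf{F^*})^{-1}$ retained through degree $k-d_{i}$, i.e.\ reduction $\bmod\ x^{k-\vec{d}+1}$ in the usual sense---is the reading under which the statement is true, and it is also the one forced by the genuine division, since the quotient must be the full polynomial part of $\mathbf{F}^{-1}x^{k}$. In short, you take the paper's route but prove what it only asserts, and you correctly isolate the off-by-one truncation convention on which the strict bound $\deg\mathbf{C_k}<d_{\max}$ hinges.
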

\begin{proof}
Replacing $x$ by $1/x$ in equation (\ref{eq:division}) and multiplying both sides by $x^k$ gives
\begin{eqnarray*}
x^k I & = &  x^k \mathbf{F^*}(1/x) \cdot \mathbf{G_k^*}(1/x) + x^{ d_{\max}}\mathbf{C_k^*}(1/x) \\
 & = & \mathbf{F}(x) \cdot x^{k - \vec{d}} \cdot \mathbf{G_k^*}(1/x) + x^{ d_{\max}}\mathbf{C_k^*}(1/x) \\
 & = & \mathbf{F} \cdot \mathbf{G_k} + \mathbf{C_k} ~.
\end{eqnarray*}
The definition of $\mathbf{G_k^*}$ implies that $\mathbf{G_k}$ is a matrix polynomial. Therefore $\mathbf{C_k}$ is also a matrix polynomial and, by its definition, has degree less than $d_{\max}$. 
\end{proof}

Lemma \ref{lem:remainder} implies that $x^k \cdot e_j$ is determined by the $j$-th column of
$\mathbf{G_k}$ and $\mathbf{C_k}$ allowing us to construct the columns of the quotient and
remainder for the corresponding columns when $\mathbf{E}$ is divided by $\mathbf{F}$.

\prettyref{lem:remainder} shows how the series expansion 
$$
({\mathbf{F^*}})^{-1} = \mathbf{G} = G_0 + G_1 x + G_2 x^2 + \cdots 
$$
can be used to compute a remainder of $x^{k}I$ divided by $\mathbf{F}$
for any $k \ge d_{\max}$. Similarly, the $i$-th column ${\mathbf{G}}_{i} = {\mathbf{G}}e_{i}$
of ${\mathbf{G}}$ allows us to compute a remainder $\mathbf{r}$ of $x^{k}e_{i}$ divided by $\mathbf{F}$, with $\deg\mathbf{r}<d_{\max}$. Note that the degrees of columns corresponding to $e_{i}$ are bounded by $s_{i}$, so we need to compute the series expansions ${\mathbf{G}}_{i}$
to at least order $s_{i}$. Let us now look to see how these series expansions can be computed efficiently.

\begin{lem}
Let $\mathbf{G} = (\mathbf{F^*})^{-1}$. Then computing the series expansions ${\mathbf{G}}_{i}$ to order $s_{i}$ for all $i$'s where  $s_{i}\ge d_{\max}$ can be done with a cost of \textup{$O^{\sim}\left(n^{\omega}d_{\max}\right)$
field operations.}
\end{lem}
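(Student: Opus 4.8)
The plan is to invert $\mathbf{F^*}$ as a power series using the high-order lifting machinery of \cite{storjohann:2003}, exploiting that the target orders $s_i$, although individually as large as $\sum_j d_j$, are balanced in aggregate. First I would record the two structural facts already established in the surrounding text: $\mathbf{F^*}$ has degree at most $d_{\max}$, and its constant term $\mathbf{F^*}(0)$ is nonsingular, so $\mathbf{G}=(\mathbf{F^*})^{-1}$ is a well-defined power series and the $x^{d}$-adic lifting of \cite{storjohann:2003} applies and is deterministic. Since $\mathbf{F}$ is column reduced we also have $\sum_i s_i=\sum_i d_i\le n\,d_{\max}$, which is the key quantitative input.

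Next I would organize the columns into dyadic groups according to the order to which each must be computed. For $k=0,1,\dots,\lceil\log_2(s_{\max}/d_{\max})\rceil$ let $J_k=\{\,i : s_i\ge 2^{k}d_{\max}\,\}$. The aggregate bound $\sum_i s_i\le n\,d_{\max}$ gives at once $\abs{J_k}\le n/2^{k}$, hence the balance relation $\abs{J_k}\cdot 2^{k}d_{\max}\le n\,d_{\max}$: at every scale the number of still-active columns times the order to which they are advanced is $O(n\,d_{\max})$. There are only $O(\log n)$ groups because $s_{\max}\le n\,d_{\max}$. The computation itself would then proceed by first using high-order lifting to produce the high-order components (residues) of $\mathbf{G}$ at the dyadic scales $2^{k}d_{\max}$; by \cite{storjohann:2003} all of these together cost $O^{\sim}(n^{\omega}d_{\max})$, since each is obtained from its predecessor by a constant number of multiplications of $n\times n$ matrices of degree $O(d_{\max})$. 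From the residue at scale $2^{k}d_{\max}$ I would reconstruct, for the columns in $J_k$, the block of coefficients of $\mathbf{G}$ lying in that dyadic band, advancing those columns from order $2^{k}d_{\max}$ to order $2^{k+1}d_{\max}$; concatenating the bands across $k$ yields each prefix $\mathbf{G}_i$ to order $s_i$.

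The delicate point, and the step I expect to be the main obstacle, is bounding the cost of this per-scale reconstruction. A scale-$k$ band has width $2^{k}d_{\max}$ but is required for only $\abs{J_k}\le n/2^{k}$ columns, so a naive term-by-term lifting would cost about $n^{2}\abs{J_k}^{\omega-2}\cdot 2^{k}d_{\max}$, a quantity that grows like $n^{\omega}d_{\max}\,2^{k(3-\omega)}$ and therefore sums to $\Theta(n^{3}d_{\max})$ across the scales, which is too large. The way around this is to avoid the linear-in-order dependence altogether: the residues already encode the band, so the reconstruction never iterates through the order but instead amounts to multiplying a degree-$O(d_{\max})$ matrix against the selected residue columns. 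This is precisely an unbalanced product, in which column $i$ carries a degree budget proportional to $s_i$, so the total budget is $\xi=\sum_i s_i\le n\,d_{\max}$ with average $s=\xi/n\le d_{\max}$.

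Finally I would invoke Theorem \ref{thm:multiplyUnbalancedMatrices} on this assembly: with $\xi=\sum_i s_i$ and $s=\xi/n=O(d_{\max})$ it bounds the multiplication by $O^{\sim}(n^{\omega}s)=O^{\sim}(n^{\omega}d_{\max})$, and together with the $O^{\sim}(n^{\omega}d_{\max})$ spent producing the residues this gives the claimed total. The part requiring the most care is verifying that the reconstruction genuinely meets the hypotheses of Theorem \ref{thm:multiplyUnbalancedMatrices}: one must check that at each scale the active residue columns form a factor of the required shape, that it is the column-degree budget $\xi=\sum_i s_i$ (and not merely the column count at a single scale) that controls the cost, and that the $O(\log n)$ scales introduce only the soft-$O$ logarithmic overhead rather than a genuine extra factor.
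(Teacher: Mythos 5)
Your proposal follows essentially the same route as the paper's proof: both use the bound $\sum_i s_i \le n\,d_{\max}$ to split the columns into $O(\log n)$ dyadic groups (yours nested via $J_k=\{i : s_i\ge 2^k d_{\max}\}$, the paper's disjoint ranges $[2^{i-1}d_{\max},2^{i}d_{\max})$) with the identical counting argument $|J_k|\le n/2^{k}$, and both then rely on the high-order lifting of \cite{storjohann:2003} to expand each group to its dyadic order at cost $O^{\sim}\left(n^{\omega}d_{\max}\right)$ per group. The only difference is one of granularity: the paper invokes Storjohann's series-solution algorithm as a black box for the per-group cost, whereas you unpack it into residue doubling plus a batched, aggregate-budget band reconstruction (framed via Theorem \ref{thm:multiplyUnbalancedMatrices}); that is correct supplementary detail rather than a different argument.
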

\begin{proof}
Let us assume,  without any loss of generality, that the columns of $\mathbf{F}$ and the corresponding entries of $\vec{s}=\left[s_{1},\dots,s_{n}\right]$ are arranged so that they are in increasing order. We can then separate $\vec{s}$ into $\left\lceil \log n\right\rceil +1$ disjoint
lists  $\vec{s}_{{j}^{\left(0\right)}},\vec{s}_{{j}^{\left(1\right)}},\vec{s}_{{j}^{\left(2\right)}},\dots,\vec{s}_{{j}^{\left(\left\lceil \log n\right\rceil \right)}}$
with entries in the ranges $[0,d_{\max})$, $[d_{\max},2d_{\max})$, ..., $[2^{\left\lceil \log n\right\rceil -2}d_{\max},2^{\left\lceil \log n\right\rceil -1}d_{\max})$,
$[2^{\left\lceil \log n\right\rceil -1}d_{\max},nd_{\max}]$ respectively, where
each ${j}^{(i)}$ consists of a list of indices of the entries of
$\vec{s}$ that belong to $\vec{s}_{{j}^{\left(i\right)}}$. Notice
that ${j}^{\left(i\right)}$ has at most $n/2^{i-1}$ entries,
since otherwise the sum of the entries of $\vec{s}_{{j}^{(i)}}$ would
exceed $\sum\vec{s} \ge nd_{\max}$. We then compute series expansions ${\mathbf{G}}_{{j}^{\left(1\right)}},{\mathbf{G}}_{{j}^{\left(2\right)}},\dots,{\mathbf{G}}_{{j}^{\left(\left\lceil \log n\right\rceil \right)}}$
separately, to order $2d_{\max},4d_{\max},\dots,2^{\left\lceil \log n\right\rceil -1}d_{\max}/2,nd_{\max},$
respectively, where again ${\mathbf{G}}_{{j}^{(i)}}$ consists
of the columns of ${\mathbf{G}}$ that are indexed by the entries
in ${j}^{\left(i\right)}$. These computations are done efficiently using the higher order series solution algorithm from \cite{storjohann:2003}. For ${\mathbf{G}}{}_{{j}^{\left(i\right)}}$,
there are at most $n/2^{i-1}$ columns, so computing the series expansion
to order $2^{i}d_{\max}$ costs $O^{\sim}\left(n^{\omega}d_{\max}\right)$. Doing this for $i$ from $1$ to $\left\lceil \log n\right\rceil $ then costs $O^{\sim}\left(n^{\omega}d_{\max}\right)$ field operations. 
\end{proof}
With the series expansions computed, we can compute a remainder
$\mathbf{R}$ of $\mathbf{E}$ divided by $\mathbf{F}$.
\begin{lem}
A remainder $\mathbf{R}$ of $\mathbf{E}$ divided by $\mathbf{F}$,
where $\deg\mathbf{R}<d_{\max}$, can be computed with a cost of $O^{\sim}\left(n^{\omega}d_{\max}\right)$
field operations.
\end{lem}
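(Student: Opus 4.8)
The plan is to build $\mathbf{R}$ column by column, exploiting the fact that every column of $\mathbf{E}$ has the special form $x^{k}e_{i}$ for some power $k$ and some standard basis vector $e_{i}$. From the construction of $\mathbf{E}$ the powers appearing in the $i$-th block are $0,r_{i},r_{i}+d_{\max},\dots,r_{i}+(q_{i}-1)d_{\max}$, all bounded above by $s_{i}$. I would split these into two cases. When $k<d_{\max}$ (which covers the columns $e_{i}$ and $x^{r_{i}}e_{i}$) the column already has degree below $d_{\max}$, so it is its own remainder and is copied directly into $\mathbf{R}$. When $k\ge d_{\max}$, I would invoke \prettyref{lem:remainder}: the remainder of $x^{k}e_{i}$ modulo $\mathbf{F}$ is the $i$-th column $\mathbf{C_k}e_{i}$ of $\mathbf{C_k}$, and by that lemma $\mathbf{C_k}e_{i}=x^{k}e_{i}-\mathbf{F}\cdot(\mathbf{G_k}e_{i})$.

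The key observation is that, because $k\ge d_{\max}$, the term $x^{k}e_{i}$ contributes nothing in degrees below $d_{\max}$; since $\mathbf{C_k}e_{i}$ has degree less than $d_{\max}$ this forces $\mathbf{C_k}e_{i}=-\bigl[\mathbf{F}\cdot(\mathbf{G_k}e_{i})\bigr]_{<d_{\max}}$, where $[\,\cdot\,]_{<d_{\max}}$ denotes truncation to degrees below $d_{\max}$. This truncation depends only on the coefficients of $\mathbf{G_k}e_{i}$ of degree less than $d_{\max}$. Through the reversal $\mathbf{G_k}=x^{k-\vec{d}}\,\mathbf{G_k^*}(1/x)$, these low-degree coefficients are exactly a width-$d_{\max}$ window of high-order terms of the series column $(\mathbf{F^*})^{-1}e_{i}$ near order $k$; since $k\le s_{i}$ they are all available from the series expansion $\mathbf{G}_{i}$ computed to order $s_{i}$. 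By the preceding lemma all of these series columns can be produced in $O^{\sim}\left(n^{\omega}d_{\max}\right)$ field operations, which is the first of the two main costs.

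With the series in hand, the remainders are extracted by a single balanced multiplication. For each nontrivial column I would read off, by reindexing only and without arithmetic, the degree-$<d_{\max}$ part of $\mathbf{G_k}e_{i}$ and place it as a column of an $n\times\bar{n}$ matrix $\mathbf{W}$ of degree less than $d_{\max}$; since $\bar{n}\le 2n$, $\mathbf{W}$ has $O(n)$ columns. The nontrivial columns of $\mathbf{R}$ are then $-\bigl[\mathbf{F}\cdot\mathbf{W}\bigr]_{<d_{\max}}$, a product of the $n\times n$ matrix $\mathbf{F}$ of degree at most $d_{\max}$ with $\mathbf{W}$, computable in $O^{\sim}\left(n^{\omega}d_{\max}\right)$ field operations by standard polynomial matrix multiplication. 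Adding the two contributions gives the claimed bound.

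I expect the main obstacle to be the degree-reduction observation itself together with the bookkeeping behind it. A direct reading of \prettyref{lem:remainder} would multiply $\mathbf{F}$ against the full quotient columns $\mathbf{G_k}e_{i}$, whose degrees can be as large as $s_{i}\in O(n\,d_{\max})$; batching those would require the dyadic balancing used for the series computation and, for $\omega<3$, would only yield a bound of the shape $O^{\sim}(n^{3}d_{\max})$ rather than the target $O^{\sim}(n^{\omega}d_{\max})$. The point that only the degree-$<d_{\max}$ part of each $\mathbf{G_k}e_{i}$ enters the remainder is what collapses the whole extraction to a single balanced multiplication against a degree-$d_{\max}$ matrix; verifying carefully that the reversal correctly identifies these low-degree coefficients with the already-computed high-order series terms is the step that needs the most care.
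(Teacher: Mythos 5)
Your proposal is correct and is in substance the same as the paper's proof: both rely on the preceding lemma for the series expansions, observe that only an $O(d_{\max})$-width window of coefficients of each quotient column actually enters the remainder, and then batch all $O(n)$ nontrivial columns into a single polynomial matrix multiplication of balanced degree $O(d_{\max})$, giving the $O^{\sim}\left(n^{\omega}d_{\max}\right)$ bound. The only difference is presentational: the paper stays in the reversed domain, computing each remainder as $\left(e_{i}-\mathbf{F^{*}}\left(\mathbf{G}_{i}\bmod x^{k-\vec{d}}\right)\right)/x^{k-d_{\max}}$ with a width-$2d_{\max}$ window, whereas you work in the original domain with $\mathbf{F}$ and the truncation $-\bigl[\mathbf{F}\cdot\mathbf{W}\bigr]_{<d_{\max}}$, which are mirror images of one another under $x\mapsto 1/x$.
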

\begin{proof}
The remainder $\mathbf{r}$ of $x^{k}e_{i}$ divided by $\mathbf{F}$
can be obtained by 
\[
\left(e_{i}-\mathbf{F^{*}}\left({\mathbf{G}}_{i}\mod x^{k-\vec{d}}\right)\right)/x^{k-d_{\max}} ~
\mbox{ where } \mathbf{G} = ({\mathbf{F^*}})^{-1}.
\]
Note that only the terms from ${\mathbf{G}}_{i}$ with degrees in the range $[k-2d_{\max},k)$ are needed for this computation, which means we are just multiplying $\mathbf{F^*}$ with a polynomial vector with degree bounded by $2d_{\max}$. In order to make the multiplication more efficient,
we can compute all the remainder vectors at once. Since there at most $n$ columns with degrees no less than $d_{\max}$, the cost is just the
multiplication of matrices of dimension $n$ and degrees bounded by
$2d_{\max}$, which costs $O^{\sim}\left(n^{\omega}d_{\max}\right)$ field operations.
\end{proof}

\begin{exmp}\label{example5}
Continuing with Example \ref{example4} we see that only column $5$ of $\mathbf{E}$, which is $x^4 \cdot e_3$ needs to be reduced. In this case $\mathbf{G_4^*} = (\mathbf{F^{*}})^{-1} ~ \mbox{ mod } x = {\tiny \left[ \begin {array}{ccc} 0&0&2\\ \noalign{\medskip}3&0&2
\\ \noalign{\medskip}0&-1&2\end {array} \right]} 
 $ and we have that 
$$
\mathbf{G_4} = \left[ \begin {array}{ccc} 0&0&2\,x\\ \noalign{\medskip}3\,x&0&2\,x
\\ \noalign{\medskip}0&-x&2\,x\end {array} \right]
\mbox{ and } ~~ \mathbf{C_4} = \left[ \begin {array}{ccc} {x}^{2}&-2\,x+2&-2\,x+2
\\ \noalign{\medskip}-{x}^{2}-3\,x-3&-{x}^{2}-2&-2
\\ \noalign{\medskip}3\,x&1&0\end {array} \right] .
$$
Therefore we construct $\mathbf{Q}$ as 
$$
\mathbf{Q} = \left[ \begin{array}{ccccc} 0 & 0 & 0 & 0 & 2 x \\
0 & 0 & 0 & 0 & 2x \\
0 & 0 & 0 & 0 & 2x  
\end{array} \right]
$$
and then multiply the $\mathbf{N}$ matrix from Example \ref{example4} by
$$
\left[ \begin{array}{cc} \mathbf{I_3} & - \mathbf{Q} \\ 0 & \mathbf{I_5} \end{array} \right]
$$
which then gives a minimal kernel basis for $[\mathbf{F} ,  - \mathbf{R} ]$ as
$$
\mathbf{\hat{N}} = 
\left[ 
\begin{array}{ccc|ccccc} 1&1&1&x-1&1&0& -3 x^2 & 0\\ 
\noalign{\medskip}1&2\,{x}^{2}+2&2\,{x}^{2}+1&x-1&x+2&0 & -3\,{x}^{2}-1 & x \\ 
\noalign{\medskip}0&2\,{x}^{4}+{x}^{2}&2
\,{x}^{4}+1&-2\,x+2&-2&1 & - x^2 & x^3 
\\
0 & 0 & 0 & 0 & ~0 & -x & 1 & 0 \\
-2 x & 0 & 0 & 0 & ~0 & 0 & -x^3 & 1 
\end{array} \right]^t ~.
$$
The Hermite form for $\mathbf{F}$ is then determined using the bottom $5$ rows of
$\mathbf{\hat{N}}$ as in Example \ref{example4}.
\qed
\end{exmp}

With the remainder $\mathbf{R}$ computed, we can now  compute a $\left(\left[\mathbf{F},-\mathbf{R}\right],\left[-\bar{u},-\vec{s^*}\right]\right)$-kernel
basis that can be used to recover the Hermite normal form using \prettyref{lem:reduceToRemainder}.
This in turn gives us our Hermite form.
\begin{thm}
A Hermite normal form of $\mathbf{F}$ can be computed deterministically
with a cost of \textup{$O^{\sim}\left(n^{\omega}d_{\max}\right)$ field operations.}
\end{thm}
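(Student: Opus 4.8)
The plan is to assemble the lemmas of this section into a single pipeline and to check that each stage costs $O^{\sim}(n^{\omega}d_{\max})$. I would begin with preprocessing. First reduce $\mathbf{F}$ to a column-reduced equivalent via the column basis algorithm of Theorem~\ref{thm:fastcolbasis}, at cost $O^{\sim}(n^{\omega}s)$; since column reducedness forces $\sum_i d_i = \deg\det\mathbf{F} = \sum_i s_i$, the relevant average degree satisfies $s \le d_{\max}$, so this fits the target. Next run Algorithm~\ref{alg:hermiteDiagonal} to obtain the diagonal degrees $\vec{s}$ of $\mathbf{H}$; by Theorem~\ref{thm:diagonalCost} this again costs $O^{\sim}(n^{\omega}s)\subseteq O^{\sim}(n^{\omega}d_{\max})$. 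With $\vec{s}$ in hand I can form the expansion matrix $\mathbf{E}$ and the balanced shift $\vec{s^*}$ of Subsection~\ref{subsec:2}, whose entries are all at most $d_{\max}$ and whose column dimension $\bar{n}\le 2n$.

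The core of the argument is the chain of reductions that replaces the original, badly unbalanced problem of computing a $[-\vec{u},-\vec{s}]$-minimal kernel basis of $[\mathbf{F},-I]$ (where $s_{\max}$ may be enormous) by a balanced one. Lemma~\ref{lem:expandH} first rewrites the target as a $([\mathbf{F},-\mathbf{E}],[-\vec{u},-\vec{s^*}])$-kernel basis, and Lemma~\ref{lem:reduceToRemainder} then replaces $\mathbf{E}$ by its remainder $\mathbf{R}=\mathbf{E}-\mathbf{F}\mathbf{Q}$ modulo $\mathbf{F}$ (of degree less than $d_{\max}$) and replaces $\vec{u}$ by $\vec{u^*}=[2d_{\max},\dots,2d_{\max}]$. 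To execute this I must actually produce $\mathbf{R}$: following Lemma~\ref{lem:remainder} and the two subsequent cost lemmas, $\mathbf{R}$ is extracted from the truncated power series of $(\mathbf{F^*})^{-1}$ using higher-order lifting, at total cost $O^{\sim}(n^{\omega}d_{\max})$. I would then invoke the kernel-basis algorithm (Theorem~\ref{thm:costGeneral}, in the general-shift form of \cite{za2012}) on $[\mathbf{F},-\mathbf{R}]$: this matrix has $n$ rows, $O(n)$ columns, column degrees bounded by $d_{\max}$, and the shift $[-\vec{u^*},-\vec{s^*}]$ has all entries of magnitude $O(d_{\max})$, so the data is balanced and the kernel basis is computed in $O^{\sim}(n^{\omega}d_{\max})$.

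Finally I would recover $\mathbf{H}$. From the computed kernel basis I extract the lower block $\mathbf{N}_d$ and select the columns $\bar{\mathbf{N}}$ whose $-\vec{s^*}$-column degrees are at most $0$; by Lemma~\ref{lem:reduceToRemainder} the matrix $\mathbf{E}\cdot\bar{\mathbf{N}}$ is a column basis of $\mathbf{H}$ whose nonzero columns have $-\vec{s}$-column degree $0$. Multiplication by $\mathbf{E}$ is inexpensive: each block of $\mathbf{E}$ merely shifts and superposes the rows of $\bar{\mathbf{N}}$ associated to one index, so $\mathbf{E}\bar{\mathbf{N}}$ is a re-packing of polynomial coefficients rather than a genuine product, and the output has total size $O(n^2 d_{\max})$. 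Then, exactly as in Lemma~\ref{lem:recoverH} and Remark~\ref{rmk1}, I compute the constant matrix $U$ putting $\lcoeff(x^{-\vec{s}}\,\mathbf{E}\bar{\mathbf{N}})$ into reduced column echelon form, an $O(n^{\omega})$ linear-algebra step, and read off $\mathbf{H}$ as the principal $n\times n$ submatrix of $(\mathbf{E}\bar{\mathbf{N}})\,U$.

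Since every stage is bounded by $O^{\sim}(n^{\omega}d_{\max})$ and only finitely many stages occur (the logarithmically many lifting rounds already absorbed into the soft-$O$), the overall cost is $O^{\sim}(n^{\omega}d_{\max})$; the procedure is deterministic because the lowest-order coefficient of $\mathbf{F^*}$ is nonsingular, which guarantees the series expansion of $(\mathbf{F^*})^{-1}$ exists and makes the lifting of \cite{storjohann:2003} deterministic. The step I expect to require the most care is justifying the kernel-basis cost for $[\mathbf{F},-\mathbf{R}]$: Theorem~\ref{thm:costGeneral} is stated for the shift equal to the column degree, so I must confirm that the same $O^{\sim}(n^{\omega}d_{\max})$ bound survives for the general but balanced shift $[-\vec{u^*},-\vec{s^*}]$ (all entries $O(d_{\max})$), which is precisely where the degree- and shift-balancing achieved by Lemmas~\ref{lem:expandH} and~\ref{lem:reduceToRemainder} is indispensable.
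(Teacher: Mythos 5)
Your proposal is correct and takes essentially the same route as the paper: the theorem there is stated as the culmination of exactly the pipeline you assemble (column reduction, the diagonal algorithm of Section~\ref{sec:diagonals}, the reductions of Lemma~\ref{lem:expandH} and Lemma~\ref{lem:reduceToRemainder}, the series-expansion remainder computation with higher-order lifting, the balanced kernel-basis computation on $\left[\mathbf{F},-\mathbf{R}\right]$, and recovery of $\mathbf{H}$ via Lemma~\ref{lem:recoverH}), which is why the paper gives no separate proof text. Your explicit attention to the general-shift form of the kernel-basis cost bound and to the structured (hence cheap) multiplication by $\mathbf{E}$ simply makes precise what the paper leaves implicit.
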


\section{Conclusion\label{sec:Future-Research}}

In this paper we have given a new, deterministic algorithm for computing the Hermite normal form of a nonsingular matrix polynomial. Our method relied on the efficient, deterministic computation of the diagonal elements of the Hermite form. In terms of future research we are interested in finding efficient, deterministic algorithms for other forms such as the Popov normal form. We also expect that our methods can be improved so that normal form computation can be done in cost $O^{\sim}\left(n^{\omega-1}s\right)$, that is with maximal degree replaced by average degree. 
In addition we are interested in fast, normal form algorithms where then entries are differential operators rather than polynomials. Such algorithms are useful for reducing systems of linear differential equations to solveable systems \cite{barkatou:2013}. 

\bibliographystyle{plainnat}

\begin{thebibliography}{10}

\bibitem{barkatou:2013}
M.~Barkatou, C.~El Bacha, G.~Labahn, and E.~Pfl\"ugel.
\newblock On simultaneous row and column reduction of higher-order linear
  differential systems.
\newblock {\em Journal of Symbolic Computation}, 49(1):45--64, 2013.

\bibitem{BeLa94}
B.~Beckermann and G.~Labahn.
\newblock A uniform approach for the fast computation of matrix-type {Pad{\'e}}
  approximants.
\newblock {\em SIAM Journal on Matrix Analysis and Applications},
  15(3):804--823, 1994.

\bibitem{BL2000}
B.~Beckermann and G.~Labahn.
\newblock Fraction-free computation of matrix rational interpolants and matrix
  {G}{C}{D}s.
\newblock {\em SIAM Journal on Matrix Analysis and Applications},
  22(1):114--144, 2000.

\bibitem{BLV:1999}
B.~Beckermann, G.~Labahn, and G.~Villard.
\newblock Shifted normal forms of polynomial matrices.
\newblock In {\em Proceedings of the International Symposium on Symbolic and
  Algebraic Computation}, ISSAC'99, pages 189--196, 1999.

\bibitem{BLV:jsc06}
B.~Beckermann, G.~Labahn, and G.~Villard.
\newblock Normal forms for general polynomial matrices.
\newblock {\em Journal of Symbolic Computation}, 41(6):708--737, 2006.

\bibitem{Giorgi2003}
P.~Giorgi, C.-P. Jeannerod, and G.~Villard.
\newblock On the complexity of polynomial matrix computations.
\newblock In {\em {Proceedings of the International Symposium on Symbolic and
  Algebraic Computation, Philadelphia, Pennsylvania, USA}}, ISSAC'03, pages
  135--142. ACM Press, 2003.

\bibitem{GSSV2012}
S.~Gupta, S.~Sarkar, A.~Storjohann, and J.~Valeriote.
\newblock Triangular x-basis decompositions and derandomization of linear
  algebra algorithms over {K}[x].
\newblock {\em Journal of Symbolic Computation: Special issue in honour of the
  research and influence of Joachim von zur Gathen at 60}, 47(4):422--453,
  2012.

\bibitem{GS2011}
S.~Gupta and A.~Storjohann.
\newblock Computing {H}ermite forms of polynomial matrices.
\newblock In {\em Proceedings of the International Symposium on Symbolic and
  Algebraic Computation}, ISSAC'11, pages 155--162, 2011.

\bibitem{hafner}
J.~Hafner and K.~McCurley.
\newblock Asyptotically fast triangularization of matrices over rings.
\newblock {\em SIAM Journal of Computing}, 20:1068--1083, 1991.

\bibitem{iliopoulos}
C.~Iliopoulos.
\newblock Worst-case complexity bounds on algorithms for computing the
  canonical structure of finite abelian groups and the {H}ermite and {S}mith
  normal forms of integer matrices.
\newblock {\em SIAM Journal of Computing}, 18:658--669, 1986.

\bibitem{jeannerod-villard:05}
C.~P. Jeannerod and G.~Villard.
\newblock Essentially optimal computation of the inverse of generic polynomial
  matrices.
\newblock {\em Journal of Complexity}, 21(1):72--86, 2005.

\bibitem{kailath:1980}
T.~Kailath.
\newblock {\em Linear Systems}.
\newblock Prentice-Hall, 1980.

\bibitem{mulders-storjohann:2003}
T.~Mulders and A.~Storjohann.
\newblock On lattice reduction for polynomial matrices.
\newblock {\em Journal of Symbolic Computation}, 35(4):377--401, April 2003.

\bibitem{storjohann:2003}
A.~Storjohann.
\newblock High-order lifting and integrality certification.
\newblock {\em Journal of Symbolic Computation,}, 36:613--648, 2003.

\bibitem{storjohann-labahn96}
A.~Storjohann and G.~Labahn.
\newblock Asymptotically fast computation of {H}ermite forms of integer
  matrices.
\newblock In {\em International Symposium on Symbolic and Algebraic
  Computation}, ISSAC'96, pages 259--266, 1996.

\bibitem{villard96}
G.~Villard.
\newblock Computing {P}opov and {H}ermite forms of polynomial matrices.
\newblock In {\em International Symposium on Symbolic and Algebraic
  Computation}, ISSAC'96, pages 250--258, 1996.

\bibitem{ZL2012}
Zhou W. and Labahn G.
\newblock Efficient algorithms for order basis computation.
\newblock {\em Journal of Symbolic Computation}, 47(7):793--819, 2012.

\bibitem{zhou:phd2012}
W.~Zhou.
\newblock {\em Fast Order Basis and Kernel Basis Computation and Related
  Problems}.
\newblock PhD thesis, University of Waterloo, 2012.

\bibitem{za2009}
W.~Zhou and G.~Labahn.
\newblock Efficient computation of order bases.
\newblock In {\em Proceedings of the International Symposium on Symbolic and
  Algebraic Computation}, ISSAC'09, pages 375--382. ACM, 2009.

\bibitem{za2013}
W.~Zhou and G.~Labahn.
\newblock Fast computation of column bases of polynomial matrices.
\newblock In {\em Proceedings of the International Symposium on Symbolic and
  Algebraic Computation}, ISSAC'13, pages 379--387. ACM, 2013.

\bibitem{za2012}
W.~Zhou, G.~Labahn, and A.~Storjohann.
\newblock Computing minimal nullspace bases.
\newblock In {\em Proceedings of the International Symposium on Symbolic and
  Algebraic Computation}, ISSAC'12, pages 375--382. ACM, 2012.
\end{thebibliography}

\end{document}